\documentclass[11pt]{article}

\usepackage[a4paper,margin=1in]{geometry}
\usepackage{amsmath,amssymb,amsthm,mathtools}
\usepackage{algorithm}
\usepackage{algpseudocode}
\usepackage{enumitem}
\usepackage{hyperref}
\usepackage{xcolor}
\usepackage{graphicx}
\usepackage{booktabs}
\usepackage{tikz}
\usetikzlibrary{arrows.meta,positioning,calc,fit,backgrounds,decorations.pathreplacing}
\hypersetup{
  colorlinks=true,
  linkcolor=blue!60!black,
  citecolor=blue!60!black,
  urlcolor=blue!60!black
}

\newtheorem{theorem}{Theorem}[section]
\newtheorem{lemma}[theorem]{Lemma}
\newtheorem{corollary}[theorem]{Corollary}

\newtheorem{definition}[theorem]{Definition}

\newcommand{\dist}{\operatorname{dist}}
\newcommand{\rad}{\operatorname{rad}}
\newcommand{\diam}{\operatorname{diam}}
\newcommand{\ecc}{\operatorname{ecc}}
\newcommand{\Ball}{\operatorname{B}}
\newcommand{\cost}{\operatorname{cost}}
\newcommand{\PathCost}{\gamma_{\mathrm{path}}}
\newcommand{\bd}{\gamma_b}
\newcommand{\emptyword}{\varnothing}
\newcommand{\Req}{\operatorname{Req}}
\newcommand{\Comp}{\operatorname{Comp}}

\title{An $O(n^5)$-Time Algorithm for Optimal Broadcast Domination}
\author{Kleitos Papadopoulos}
\date{\today}

\begin{document}
\maketitle

\begin{abstract}
Broadcast domination assigns a nonnegative integer power to every vertex of a graph so that every vertex is within the assigned power of some broadcasting vertex, and the objective is to minimize the sum of the powers. Heggernes and Lokshtanov proved that the problem is polynomial-time solvable on arbitrary connected unweighted graphs by showing that some optimal efficient broadcast has a domination graph that is a path or a cycle, and by reducing the general case to an $O(n^6)$-time algorithm.

This paper gives an efficient algorithm of the path-case. Instead of building one auxiliary acyclic graph for every possible left endpoint vertex, we build a single directed acyclic graph whose states are oriented broadcast balls together with their two possible residual sides. The resulting path-case algorithm runs in $O(n^3)$ time and $O(n^3)$ space on an $n$-vertex graph. Combining this routine with the same peel-one-ball reduction of Heggernes and Lokshtanov yields an exact $O(n^5)$-time algorithm for optimal broadcast domination on arbitrary connected unweighted graphs. This resolves the quintic-time conjecture for general graphs attributed to Heggernes and S\ae ther and recorded in subsequent surveys of broadcast domination.
\end{abstract}

\section{Introduction}

Let $G=(V,E)$ be a connected, simple, unweighted graph with $|V|=n$. A \emph{broadcast} is a function
\[
  f:V\to \{0,1,\ldots,\diam(G)\}.
\]
A vertex $v$ with $f(v)>0$ broadcasts with power $f(v)$ and dominates every vertex in the ball
\[
  \Ball_G(v,f(v))=\{x\in V: \dist_G(v,x)\le f(v)\}.
\]
The broadcast is dominating if every vertex is dominated by at least one broadcasting vertex. Its cost is
\[
  \cost(f)=\sum_{v\in V} f(v).
\]
The optimum value is denoted $\bd(G)$. Throughout the main text we assume that all input graphs have at least two vertices. The one-vertex graph is handled separately by the convention $\bd(K_1)=0$.

Broadcast domination was introduced by Erwin~\cite{Erwin04} and further developed by Dunbar et al.~\cite{Dunbar06}. Heggernes and Lokshtanov proved that optimal broadcast domination is polynomial-time solvable on arbitrary connected graphs \cite{HL06}. Their central structural result says that every graph has an optimal efficient broadcast whose domination graph is either a path or a cycle. They then solve the path case by constructing, for every possible anchor vertex $u$, an auxiliary acyclic digraph $G_u$. Each $G_u$ has $O(n^2)$ vertices and $O(n^3)$ arcs; after preprocessing, the path case is solved in $O(n^4)$ time. The full algorithm guesses one broadcast ball to delete and calls the path routine on the residual graph, giving $O(n^6)$ time \cite{HL06}. Subsequent work treated this sextic bound as the best known general-graph running time: Heggernes and S\ae ther gave linear time for block graphs and an $O(n^4)$ consequence for chordal graphs \cite{HS12}, while later work still noted that no faster algorithm was known for general undirected graphs \cite{FoucaudGPS21}. The survey chapter of Henning, MacGillivray, and Yang records the conjecture, attributed to Heggernes and S\ae ther, that \textsc{Broadcast Domination} can be solved in $O(n^5)$ time in general \cite{HMY21}. Thus the main theorem below resolves this quintic-time conjecture.

The improvement here is to remove the anchor loop from the path case. The anchor $u$ serves only to decide which residual component of a candidate ball lies to the left and which lies to the right. We encode this orientation directly. A state is an oriented ball
\[
  (v,p,L,R),
\]
where $L$ and $R$ are the left and right residual sides of $G-\Ball(v,p)$. Arcs express that two oriented balls may be consecutive in a path-shaped broadcast. These arcs are acyclic because the left side strictly grows along every arc. Since adjacent efficient balls are distance-tight, the number of possible arcs is $O(n^3)$.

The main results are as follows.

\begin{theorem}
Let $G$ be a connected unweighted graph on $n$ vertices. The minimum-cost efficient broadcast domination whose domination graph is a path can be computed in $O(n^3)$ time and $O(n^3)$ space.
\end{theorem}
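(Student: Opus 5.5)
We plan to prove the theorem by building a single directed acyclic graph $D$ whose source-to-sink paths are in bijection with path-shaped efficient broadcasts. Each path's length will equal the broadcast's cost, so the answer is a shortest-path computation in a DAG.

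We first recall the structure from \cite{HL06}. In an efficient broadcast $f$ whose domination graph is a path $v_1,\dots,v_k$, the balls $\Ball(v_i,f(v_i))$ are pairwise disjoint. Consecutive balls are \emph{tight}: $\dist(v_i,v_{i+1})=f(v_i)+f(v_{i+1})+1$. Removing $\Ball(v_i,f(v_i))$ leaves the union of balls $1,\dots,i-1$ on one side and of balls $i+1,\dots,k$ on the other. Each side is a union of connected components of $G-\Ball(v_i,f(v_i))$ and is connected, because consecutive balls are joined by an edge. So each side is exactly one component, and for interior balls there are exactly two components. The states of $D$ are therefore oriented balls $(v,p,L,R)$ with $1\le p\le \diam(G)$. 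Here $L$ and $R$ are the two components of $G-\Ball(v,p)$ when there are exactly two; we allow $L=\emptyset$ for a first ball and $R=\emptyset$ for a last ball, when $G-\Ball(v,p)$ is connected or empty. There are $O(n^2)$ balls and each has $O(1)$ orientations, so $D$ has $O(n^2)$ states.

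Next we define the arcs. Put an arc $(v,p,L,R)\to(w,q,L',R')$ of weight $q$ when the following hold:
\begin{itemize}
\item the balls are tight, i.e.\ $\dist(v,w)=p+q+1$;
\item $\Ball(w,q)\cup R'=R$;
\item $L'=L\cup \Ball(v,p)$.
\end{itemize}
A source arc into $(v,p,\emptyset,R)$ has weight $p$, and states with $R=\emptyset$ connect to the sink. Since $|L'|>|L|$, ordering states by $|L|$ gives a topological order, so $D$ is acyclic.

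For soundness, we induct along a source--sink path. The invariant is that the balls seen so far are disjoint, their union is $L'\cup\Ball(w,q)$, and no vertex outside these balls is within the power of any of them. The last condition holds because every vertex of $R'$ lies outside $\Ball(w,q)$, and the earlier balls sit inside $L'$, which is separated from $R'$. So the union of all balls is $V$, and the cost equals the path weight. For completeness, the decomposition above maps any path-shaped efficient broadcast to such a path.

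The main obstacle is the complexity bound: naively there are $O(n^4)$ pairs of states. We use tightness. Fix the source state, which determines $v$, $p$ and $R$. The successor $w$ ranges over $O(n)$ vertices, and $q=\dist(v,w)-p-1$ is forced. The orientation is then forced as well, since $R'$ must be the component of $G-\Ball(w,q)$ not meeting $\Ball(v,p)$. Hence each state has $O(n)$ out-arcs, giving $O(n^3)$ arcs in total.

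Each arc must be tested in $O(1)$ time. We precompute all-pairs distances by BFS in $O(n^3)$ time. For each of the $O(n^2)$ balls we precompute its components with labels and sizes, in $O(n+m)$ per ball; this is $O(n^2 m)$, which exceeds $O(n^3)$ on dense graphs. That is the delicate point, and we would handle it as follows. Since $G-\Ball(v,p)$ has vertex set $\{x:\dist(v,x)>p\}$, we process $p$ in decreasing order for each fixed $v$. Components can then be maintained with union--find as layers are added, at $O(n^2\alpha)$ per $v$ using the edges between consecutive BFS layers. If that fails, we fall back on characterizing the components by distance comparisons that are checkable in $O(1)$ using the distance matrix.

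With these tables, the condition $\Ball(w,q)\cup R'=R$ reduces to two checks: the component of $G-\Ball(v,p)$ containing $w$ is $R$, and $|R|=|\Ball(w,q)|+|R'|$. The condition on $L'$ similarly reduces to a size check together with membership of $v$ in $L'$. Both are $O(1)$ per arc. A DAG shortest path over $O(n^2)$ states and $O(n^3)$ arcs then finishes within $O(n^3)$ time and space.
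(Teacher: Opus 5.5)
Your construction has the same skeleton as the paper's. States are oriented balls $(v,p,L,R)$, the DAG is ordered by $|L|$, and tightness forces $q=\dist(v,w)-p-1$, which gives $O(n^3)$ arcs. Components are maintained by decreasing the radius around each fixed center.

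The real difference is the arc test. The paper uses the membership conditions (A3), $c_\tau\in R_\sigma$ and $c_\sigma\in L_\tau$, which you also use. It adds the frontier conditions (A4), $F^R_\sigma\subseteq X_\tau$ and $F^L_\tau\subseteq X_\sigma$, tested through the precomputed table $\Req(v,p,c,w)$. You replace (A4) by the exact identity $L'=L\cup\Ball(v,p)$, tested via the size equation $|L'|=|L|+|\Ball(v,p)|$.

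This is valid. By tightness, $\Ball(w,q)$ is a connected set disjoint from $\Ball(v,p)$ that contains $w\in R$, so $\Ball(w,q)\subseteq R$. Hence $L\cup\Ball(v,p)$ is connected, avoids $\Ball(w,q)$, and contains $v\in L'$, so it lies in $L'$. The size equation then forces equality, and complementation gives $R=\Ball(w,q)\cup R'$.

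Your route buys two things:
\begin{itemize}
\item A simpler algorithm: $O(n^2)$ tables of ball and component sizes replace the $O(n^3)$ table $\Req$.
\item A more direct soundness proof. Along any source--sink path, $L_i=X_1\cup\dots\cup X_{i-1}$, so coverage follows from $R_t=\emptyword$. For $j\ge i+2$, the balls $X_i\subseteq L_{i+1}$ and $X_j\subseteq R_{i+1}$ lie in distinct components of $G-X_{i+1}$, so they are non-adjacent.
\end{itemize}
That last statement is what your third invariant should say. As written (``no vertex outside these balls is within the power of any of them'') it is vacuous, and it never addresses the path shape of the domination graph.

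The step that does not deliver the claimed bound as written is the one you flag yourself: the component precomputation. There are three problems.
\begin{itemize}
\item When layer $p+1$ is activated, you must also union along edges \emph{inside} that layer, not only along edges to layer $p+2$. In $C_5$ with $v=0$ and $p=1$, the residual $\{2,3\}$ is connected only through the intra-layer edge $23$.
\item $O(n^2\alpha(n))$ per center gives $O(n^3\alpha(n))$ overall, which is not $O(n^3)$.
\item The fallback of reading off components of $G-\Ball(v,p)$ from $O(1)$ distance comparisons is not a method. Residual connectivity depends on paths avoiding the ball, and there is no evident constant-size test on the distance matrix of $G$ that decides it.
\end{itemize}
The fix is the paper's: keep an explicit component label per vertex and merge by size, relabeling the smaller class. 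Per center, relabeling costs $O(n\log n)$ in total and each edge costs $O(1)$. Recording labels and sizes costs $O(n)$ per radius. This is $O(n^2)$ per center and $O(n^3)$ overall. Alternatively, apply Tarjan's $O(M\alpha(M,n))$ bound with $M=O(m+n\cdot\diam(G))$ operations per center, noting that $\alpha(M,n)$ is bounded by a constant once $M\ge n\log n$; otherwise the cost is $O(n\log n\,\alpha(n))=O(n^2)$.
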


\begin{theorem}
\label{thm:general-main}
Let $G$ be a connected unweighted graph on $n$ vertices. An optimal broadcast domination of $G$ can be computed in $O(n^5)$ time and $O(n^3)$ working space.
\end{theorem}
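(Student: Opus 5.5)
We derive Theorem~\ref{thm:general-main} from the path-case theorem together with the structural result of Heggernes and Lokshtanov~\cite{HL06}. That result states that $G$ has an optimal efficient broadcast whose domination graph is a path or a cycle. The plan is to reduce the cycle case to the path case by deleting one ball, as in~\cite{HL06}. The overall algorithm is:
\begin{enumerate}[label=(\roman*)]
\item compute all-pairs distances by BFS in $O(nm)\subseteq O(n^3)$ time;
\item run the path routine on $G$ itself;
\item for every vertex $v$ and power $p\in\{1,\ldots,\ecc(v)\}$, form the residual graph $G-\Ball(v,p)$, run a path routine on it, and add $p$ to the result;
\item return the cheapest broadcast found.
\end{enumerate}
There are $O(n^2)$ pairs $(v,p)$, and each path call costs $O(n^3)$ time, so the total is $O(n^5)$ time. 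The $O(n^3)$ space is reused between calls, giving $O(n^3)$ working space. The case $p=\ecc(v)$ handles single-ball broadcasts.

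\textbf{Correctness.} Every candidate the algorithm returns is a dominating broadcast, so each value found is at least $\bd(G)$. For the converse, fix an optimal efficient broadcast $f$ whose domination graph is a cycle (if it is a path, step (ii) already finds a broadcast of cost at most $\cost(f)$). Choose a broadcasting vertex $v$ with power $p=f(v)$ and delete $\Ball(v,p)$. The remaining balls form a path in the domination graph.

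The key point is that the subproblem solved in the residual graph must be the right one. Those remaining balls need not be balls of the induced subgraph $G-\Ball(v,p)$, because distances there can grow. Following~\cite{HL06}, the path routine is therefore run with \emph{distances measured in $G$}: the balls $\Ball_G(x,q)$ are restricted to the vertex set $V\setminus \Ball_G(v,p)$, and the task is to cover exactly that set. I would check that the path-case algorithm (the DAG of oriented balls $(v,p,L,R)$ with distance-tight arcs) only uses the following ingredients:
\begin{itemize}
\item distances from the ambient graph;
\item a vertex set $U$ that must be covered;
\item the component structure of $U\setminus \Ball(x,q)$.
\end{itemize}
If so, it generalizes verbatim to the instance $(G,U)$ with $U=V\setminus\Ball(v,p)$ and still runs in $O(n^3)$. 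The efficient balls of $f$ other than $v$'s then form a feasible path-shaped solution for $(G,U)$. Its endpoints are the balls adjacent to $\Ball(v,p)$ in the cycle, and the residual sides are the components of $U$ minus a ball. Hence the routine returns cost at most $\cost(f)-p$, and adding $p$ gives $\bd(G)$.

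\textbf{Main obstacle.} The delicate step is exactly this verification that the path theorem applies to the residual instance $(G,U)$. Three things must be checked there:
\begin{itemize}
\item the left/right orientation argument still gives acyclicity when components are taken inside $U$;
\item the count of distance-tight arcs stays $O(n^3)$;
\item no $O(n^3)$ per-call preprocessing is hidden beyond the single global BFS.
\end{itemize}
I would state the path theorem in this relative form (cover a given set $U$ using $G$-distances) and note that the $n$-vertex bounds are unchanged. Reconstructing the broadcast itself, and not just its cost, only needs predecessor pointers in the DAG, which stay within the same bounds.
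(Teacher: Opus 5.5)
You have the same overall scheme as the paper: peel one ball, solve a path instance on the remainder, and make $O(n^2)$ calls to an $O(n^3)$ routine while reusing space. However, your correctness argument depends on a ``relative'' path theorem (cover $U=V\setminus\Ball_G(v,p)$ using $G$-distances). You explicitly leave this theorem unchecked, and it does not go through verbatim under the reading you describe. If the candidate sets are $\Ball_G(x,q)\cap U$, they need not be connected in $G[U]$. For example, in $C_{10}$ on $0,\dots,9$ with $U=\{2,\dots,8\}$, we get $\Ball_G(2,4)\cap U=\{2,\dots,6,8\}$. Also, adjacency of two such sets is no longer equivalent to $\dist_G(x,y)=q+q'+1$, because the full balls may meet inside the peeled ball. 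So Lemma~\ref{lem:tight} no longer characterizes contacts. The proof of Lemma~\ref{lem:containment}, which uses that $X_\tau$ is connected, breaks, and with it the acyclicity and soundness arguments. The step you call the main obstacle is therefore a genuine gap as written. If you instead allow only $G$-balls contained in $U$, these coincide with balls of $G[U]$ having the same centre and radius. You are then just running the paper's routine on the induced residual graph, and the reason that suffices is exactly what your proposal misses.

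The missing idea is that your stated premise is false for efficient $f$: the remaining balls \emph{are} balls of the induced subgraph. Take $w\in V_f\setminus\{v\}$, $q=f(w)$, and $u\in\Ball_G(w,q)$. Every vertex on a shortest $w$--$u$ path lies in $\Ball_G(w,q)$, which is disjoint from $\Ball_G(v,p)$. Hence $\Ball_{G'}(w,q)=\Ball_G(w,q)$ for $G'=G[V\setminus\Ball_G(v,p)]$. So $f|_{G'}$ is an efficient dominating broadcast of $G'$ whose domination graph is $G_f-v$, a path, giving $\PathCost(G')\le\cost(f)-p$. Conversely, any dominating broadcast of $G'$ still dominates $V(G')$ in $G$ because $\dist_G\le\dist_{G'}$, after capping powers at $\ecc_G$. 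This is the paper's argument (Corollary~\ref{cor:peel}, Lemma~\ref{lem:optconsidered}). Algorithm~\ref{alg:path} runs unchanged on the induced residual with its own distances recomputed per call, which is affordable since $O(n^2)\cdot O(n^3)=O(n^5)$; your per-call-preprocessing worry is moot. On this route you must still handle two boundary cases you skip. First, a singleton residual cannot be passed to the routine with $\PathCost(K_1)=0$: on the path $abcd$, peeling $\Ball(b,1)$ would give a spurious candidate of cost $1<\bd(P_4)=2$. Second, disconnected residuals must be skipped. This is safe because the residual of the optimal peel is a union of a chain of adjacent connected balls, and by Lemma~\ref{lem:no-singleton-peel} it is not a singleton.
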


Consequently, Theorem~\ref{thm:general-main} proves the general-graph quintic bound conjectured by Heggernes and S\ae ther and cited in later surveys \cite{HMY21}.

\section{Preliminaries}

For $S\subseteq V(G)$, let $G[S]$ be the subgraph induced by $S$, and let
\[
  N_G(S)=\{x\in V(G)\setminus S: xy\in E(G)\text{ for some }y\in S\}
\]
be the open neighborhood of $S$. We omit the subscript $G$ when the graph is clear.

A broadcast $f$ is \emph{efficient} if the balls of distinct broadcasting vertices are pairwise disjoint. Let
\[
  V_f=\{v\in V(G): f(v)>0\}.
\]
For an efficient broadcast $f$, its \emph{domination graph} $G_f$ has vertex set $V_f$ and an edge $uv$ whenever there is an edge of $G$ between $\Ball(u,f(u))$ and $\Ball(v,f(v))$. Equivalently, $G_f$ is obtained by contracting every active broadcast ball to one weighted vertex and preserving inter-ball adjacencies.

\begin{lemma}[Dunbar et al.~{\cite{Dunbar06}}]
\label{lem:efficient-optimum}
Every connected graph $G$ has an efficient optimal broadcast domination.
Equivalently, there is a dominating broadcast $f$ with
\[
  \cost(f)=\bd(G)
\]
such that the balls
\[
  \{\Ball_G(v,f(v)) : v\in V_f\}
\]
are pairwise disjoint.
\end{lemma}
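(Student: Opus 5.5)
Starting from an arbitrary optimal dominating broadcast $f$, the plan is to apply a sequence of cost-nonincreasing local modifications that reduce overlap among the balls, until the balls are pairwise disjoint. Every broadcast occurring along the way stays dominating and has cost at most $\bd(G)$, so the final one is optimal and efficient. To guarantee termination, we pick, among all optimal dominating broadcasts, one minimizing $|V_f|$ and, subject to that, minimizing $\Phi(f)=\sum_{u,v\in V_f,\,u\ne v}|\Ball(u,f(u))\cap\Ball(v,f(v))|$. We then argue by contradiction that this $f$ is efficient.

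Assume two broadcasting vertices $u\neq v$ have intersecting balls, so $d=\dist(u,v)\le f(u)+f(v)$. The key merging step works as follows. If one ball contains the other, for instance $\Ball(v,f(v))\subseteq\Ball(u,f(u))$, set $f(v)=0$; this lowers the cost by $f(v)>0$, contradicting optimality. Otherwise, take a shortest $u$--$v$ path $P$ and put $r=\lceil (f(u)+f(v)+d)/2\rceil$. Since $d\le f(u)+f(v)$, we have $r\le f(u)+f(v)$. Choose $w$ on $P$ (or $w=u$ if appropriate) with $\dist(w,u)\le r-f(u)$ and $\dist(w,v)\le r-f(v)$. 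Such a vertex exists because $(r-f(u))+(r-f(v))\ge d$ and both quantities are nonnegative; the latter holds because no containment gives $r\ge f(u)$ and $r\ge f(v)$, a small check I will carry out in the proof. The triangle inequality then yields $\Ball(u,f(u))\cup\Ball(v,f(v))\subseteq\Ball(w,r)$. Replace $f(u),f(v)$ by $0$ and add $r$ to $f(w)$. Note that if $w$ already broadcasts, raising its power only enlarges its ball, so the total cost does not increase. The power bound $r\le\diam(G)$ is secured by capping at the eccentricity of $w$, since a ball of radius $\ecc(w)$ is already all of $V$.

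The new broadcast is dominating, has cost at most $\cost(f)$, and has strictly fewer broadcasting vertices: two vertices are removed and at most one is added. If $r<f(u)+f(v)$ this contradicts optimality. If $r=f(u)+f(v)$ the cost is still optimal, but $|V_f|$ dropped, which contradicts the minimal choice of $f$. Hence all balls are pairwise disjoint. The potential $\Phi$ is then not actually needed, because $|V_f|$ alone decreases.

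The main obstacle I expect is the rounding and parity in choosing $w$, especially when $f(u)+f(v)+d$ is odd or the path is short. I will handle it by explicitly placing $w$ at distance $\min(d,\max(0,r-f(u)))$ from $u$ along $P$ and verifying both inequalities in the two cases $r-f(u)\le d$ and $r-f(u)>d$.
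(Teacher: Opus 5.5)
Your proof is correct and is the merging argument the paper attributes to Dunbar et al.\ (``every non-efficient optimum can be converted into an efficient optimum of the same cost with fewer active vertices''): take an optimal broadcast with fewest active vertices, then replace two intersecting balls $\Ball(u,f(u))$, $\Ball(v,f(v))$ by one ball of radius $\lceil (f(u)+f(v)+\dist(u,v))/2\rceil\le f(u)+f(v)$ centred on a shortest $u$--$v$ path, after first discarding a contained ball and capping the new power at $\ecc(w)$. The paper only cites this fact and gives no details, so your write-up supplies what it omits; the checks you flag all go through, namely $r\ge f(u),f(v)$ from non-containment and placing $w$ at distance $\min(d,r-f(u))$ from $u$.
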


\begin{proof}
This is a theorem of Dunbar et al.~\cite{Dunbar06}. In particular, their argument shows that if $f$ is a non-efficient dominating broadcast, then one can replace $f$ by an efficient dominating broadcast $f'$ with the same cost. Applying this replacement to an optimal dominating broadcast gives an efficient optimal broadcast domination of $G$.
\end{proof}

We use the following stronger structure theorem.

\begin{theorem}[Heggernes--Lokshtanov~\cite{HL06}]
\label{thm:HLstructure}
Every connected graph has an efficient optimal broadcast domination $f$ such that $G_f$ is a path or a cycle.
\end{theorem}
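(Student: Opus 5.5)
The plan is to start from Lemma~\ref{lem:efficient-optimum}. Among all efficient optimal broadcasts, choose one that is extremal with respect to a potential function, for instance one that lexicographically maximizes the multiset of powers, or minimizes $|V_f|$. We then argue by contradiction. Suppose $G_f$ is neither a path nor a cycle. Since $G$ is connected, $G_f$ is connected, so it must contain a vertex of degree at least $3$ or a cycle together with an extra edge or pendant. We will then exhibit a local modification that produces another optimal efficient broadcast that is strictly better under the potential, which is the desired contradiction.

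The key steps are as follows. First, record that $G_f$ is connected and that adjacent balls are distance-tight. If $uv\in E(G_f)$, then
\[
  \dist(u,v)=f(u)+f(v)+1,
\]
because the balls are disjoint but joined by an edge. Second, handle a vertex $v\in V_f$ with three neighbours $a,b,c$ in $G_f$, or more generally any configuration that is not a path or cycle. Take two balls that are close in $G_f$, say $\Ball(a,f(a))$ and $\Ball(b,f(b))$ both adjacent to $\Ball(v,f(v))$. Replace the three broadcasters by a single broadcaster $w$ placed on a shortest $a$--$b$ path through $v$, with power roughly
\[
  \bigl(f(a)+f(b)+2f(v)+2\bigr)/2 .
\]
The aim is that the new ball covers the union of the three old balls at cost no more than $f(a)+f(b)+f(v)$. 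Here the slack comes from the $+1$ gaps. The standard Dunbar-type counting shows that merging adjacent balls costs $f(a)+f(b)+1$, so the merge alone is not free. The extra leverage must come from the branching itself: the third neighbour $c$ lets us shrink or absorb a ball, or the total merge of a sub-tree of $G_f$ saves enough.

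Third, after merging, the new broadcast may fail to be efficient. We re-apply Lemma~\ref{lem:efficient-optimum} to restore efficiency without increasing cost. Then we check that the potential has strictly improved, since fewer broadcasting vertices results. By optimality the cost cannot drop, so the cost is unchanged, and this contradicts the extremal choice.

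The main obstacle is the second step: finding a merge that is genuinely cost-neutral. Merging two adjacent balls is not free in general, as the $+1$ penalty shows. So the argument must carefully use the degree-$\ge3$ or cycle-with-chord structure to amortize this penalty. My first attempt would be to consider a spanning tree of $G_f$ and a longest path in it. If the tree is not a path, take a branch vertex farthest from one end and merge the whole branch with its attachment. We bound the radius of the merged ball via the eccentricity of the union of balls in the subtree, using the triangle inequality along tree edges. Then we compare this radius with the sum of powers, which exceeds the sum of the gaps because each ball has power at least $1$. If this amortization fails in some configuration, I would instead follow Heggernes--Lokshtanov's original approach and analyze "tight" and "non-tight" edges of $G_f$ separately.
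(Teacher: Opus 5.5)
Your overall frame is the right one: take an optimal efficient broadcast with $|V_f|$ minimum and derive a contradiction at a vertex $v$ of degree at least $3$ in $G_f$. It matches the sketch the paper gives of the Heggernes--Lokshtanov argument; the paper itself only cites the theorem. But you never supply the step that carries the whole proof, the cost-neutral replacement, and the candidate you write down fails.

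A ball of power $(f(a)+f(b)+2f(v)+2)/2$ does not cover $\Ball(a,f(a))\cup\Ball(v,f(v))\cup\Ball(b,f(b))$. When $a,v,b$ lie on a geodesic (say $G$ is a path), that union is a segment of diameter $2(f(a)+f(v)+f(b))+2$, so it needs power $f(a)+f(v)+f(b)+1$. Your stated aim, cost at most $f(a)+f(b)+f(v)$, is therefore unattainable. In a general graph there also need not be any shortest $a$--$b$ path through $v$.

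The spanning-tree fallback does not rescue this, because whole-branch merges are the wrong granularity. Absorbing a chain of $m$ consecutive balls lying along a geodesic (e.g.\ on a long leg of a subdivided star) needs power $\sum_i f(x_i)+\lceil (m-1)/2\rceil$. So swallowing a long branch can cost more than it saves. Your comparison of the total power against the number of gaps compares the wrong quantities.

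The missing idea is a purely local merge of $v$ with three of its $G_f$-neighbours $a,b,c$, ordered so that $f(a)\ge f(b)\ge f(c)$, with the new centre moved off $v$:
\begin{itemize}
\item Let $w$ be the vertex on a shortest $v$--$a$ path with $\dist(v,w)=f(a)-f(b)$, and give $w$ power $f(v)+f(a)+f(b)+1$.
\item Since $\dist(v,u)=f(v)+f(u)+1$ for each neighbour $u$, you get $\dist(w,a)=f(v)+f(b)+1$, and $\dist(w,u)\le f(a)-f(b)+f(v)+f(u)+1$ for $u\in\{b,c\}$.
\item By the triangle inequality together with $f(c)\le f(b)$, the new ball contains all four old balls.
\item Its power is at most $f(v)+f(a)+f(b)+f(c)$ precisely because $f(c)\ge 1$: the third neighbour pays for the $+1$ gap. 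The power is also at most $\cost(f)\le\diam(G)$, so it is legal.
\end{itemize}
Four broadcasters become one, domination is preserved, and the cost does not increase. This contradicts the minimality of $|V_f|$.

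Your re-efficiency step also needs more than Lemma~\ref{lem:efficient-optimum} as stated, since that lemma says nothing about $|V_f|$. You can use the standard fact that two intersecting balls can be replaced by one ball of power at most the sum of theirs, which removes a broadcaster. Alternatively, choose $f$ minimizing $|V_f|$ among all optimal dominating broadcasts. That choice forces $f$ to be efficient and lets the star merge above finish the proof directly.
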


The proof of Theorem~\ref{thm:HLstructure} uses two facts. First, every non-efficient optimum can be converted into an efficient optimum of the same cost with fewer active vertices. Second, if an efficient broadcast has a domination graph with a vertex of degree greater than two, one can replace several neighboring broadcasts by one broadcast of the same total cost, again reducing the number of active vertices. Repeating this gives a connected domination graph of maximum degree at most two.

For a connected graph $H$, define $\PathCost(H)$ to be the minimum cost of an efficient dominating broadcast $f$ of $H$ such that $H_f$ is a path. If a single broadcast ball dominates all of $H$, we regard the domination graph as a one-vertex path. For the one-vertex graph we use the same convention as above and set $\PathCost(K_1)=0$.

The full algorithm also uses the following peel lemma of Heggernes and Lokshtanov.

\begin{corollary}[Heggernes--Lokshtanov~{\cite[Corollary 3.5]{HL06}}]
\label{cor:peel}
For every connected graph $G$, there is an efficient optimal broadcast $f$ and an active vertex $x\in V_f$ such that, for $k=f(x)$, the graph
\[
  G'=G[V(G)\setminus \Ball_G(x,k)]
\]
is either empty, or is connected and satisfies
\[
  \bd(G')=\PathCost(G').
\]
Equivalently, when $G'$ is nonempty, it has an optimal efficient broadcast whose domination graph is a path.
\end{corollary}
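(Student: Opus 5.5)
We derive Corollary~\ref{cor:peel} from Theorem~\ref{thm:HLstructure}. Fix an efficient optimal broadcast $f$ of $G$ whose domination graph $G_f$ is a path or a cycle. If $|V_f|=1$, take $x$ to be the unique active vertex; then $\Ball_G(x,f(x))=V(G)$ and $G'$ is empty. Otherwise choose $x\in V_f$ as follows: if $G_f$ is a path, let $x$ be an endpoint; if $G_f$ is a cycle, let $x$ be any vertex. In both cases $G_f-x$ is a path. Put $k=f(x)$ and $G'=G[V(G)\setminus\Ball_G(x,k)]$. The proof has three steps: (i) $G'$ is connected, (ii) the restriction $f'=f|_{V(G')}$ is an efficient dominating broadcast of $G'$ whose domination graph is a path, and (iii) $\cost(f')=\bd(G')$.

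For (i) and (ii), efficiency gives a partition of $V(G)$ into the disjoint balls $\Ball_G(v,f(v))$. Hence $V(G')$ is the disjoint union of the balls with $v\ne x$. The key point is that distances do not shrink inside the ball. For $y\in\Ball_G(v,f(v))$, a shortest $G$-path from $v$ to $y$ stays inside $\Ball_G(v,f(v))$, since every vertex on it is at distance at most $f(v)$ from $v$. That path therefore avoids $\Ball_G(x,k)$. So $\dist_{G'}(v,y)=\dist_G(v,y)$ for such $y$, and $\Ball_{G'}(v,f(v))\supseteq \Ball_G(v,f(v))$. Conversely, $\dist_{G'}\ge\dist_G$, so $\Ball_{G'}(v,f(v))\subseteq\Ball_G(v,f(v))$. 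Thus the balls of $f'$ in $G'$ are exactly the old balls, and they are pairwise disjoint and cover $V(G')$. Two balls are adjacent in $G'$ exactly when they are adjacent in $G$, so $G'_{f'}=G_f-x$, which is a path. Each ball induces a connected subgraph, and consecutive balls along the path are joined by an edge, so $G'$ is connected. One detail needs care: $f'(v)\le\diam(G')$ might fail as a formal codomain condition. If $\ecc_{G'}(v)<f(v)$, then lowering $f(v)$ would dominate $G'$ more cheaply. The cost comparison below handles this anyway, or one can simply allow powers up to $n$.

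For (iii), which is the main obstacle, we must show that no dominating broadcast $g$ of $G'$ has cost below $\cost(f)-k$. Suppose $g$ did. Define $h$ on $G$ by $h=g$ on $V(G')$, $h(x)=k$, and $h=0$ elsewhere in the ball. Then $\cost(h)<\cost(f)=\bd(G)$. Also $h$ dominates $G$: the ball around $x$ covers $\Ball_G(x,k)$, and each $y\in V(G')$ is covered by some $\Ball_{G'}(w,g(w))\subseteq\Ball_G(w,g(w))$, because distances in $G$ are at most those in $G'$. This contradicts optimality. Hence $\bd(G')=\cost(f')$. By (ii), $f'$ is efficient with path domination graph, so $\PathCost(G')\le\cost(f')=\bd(G')$, and trivially $\PathCost(G')\ge\bd(G')$. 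It follows that $\bd(G')=\PathCost(G')$.

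Finally, reducing any power $f(v)>\ecc_{G'}(v)$ to $\ecc_{G'}(v)$ would contradict this minimality. So all powers automatically lie in $\{0,\ldots,\diam(G')\}$, and $f'$ is a legitimate broadcast of $G'$.
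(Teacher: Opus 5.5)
Your proof is correct and follows the standard derivation of this corollary from Theorem~\ref{thm:HLstructure}: remove an endpoint ball of the path, or any ball of the cycle, then prove optimality of the restriction by extension. The paper itself only cites \cite{HL06} here, and your step (iii) is the same restriction/extension argument it reuses in Lemma~\ref{lem:optconsidered}. Two small details should be made explicit: when you extend $g$ to $h$, a power $g(w)\le\diam(G')$ may exceed $\diam(G)$, so cap it at $\ecc_G(w)$ as in Lemma~\ref{lem:candidatefeasible}; and $|V(G')|\ge 2$, since each remaining ball contains its center and a neighbor, so the convention $\bd(K_1)=0$ never breaks step (iii).
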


The convention $\PathCost(K_1)=0$ is useful for the standalone path problem, but it must be treated carefully in the outer peel algorithm. A residual singleton is not already dominated by the peeled ball, so the general algorithm below handles singleton residuals explicitly rather than invoking the path routine on them.

\section{Metric facts about efficient balls}

We first state two elementary facts used throughout the algorithm.

\begin{lemma}[Ball intersection]
\label{lem:intersection}
For vertices $a,b$ and integers $p,q\ge 0$,
\[
  \Ball(a,p)\cap \Ball(b,q)\ne\emptyword
  \quad\Longleftrightarrow\quad
  \dist(a,b)\le p+q.
\]
\end{lemma}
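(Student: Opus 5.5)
The plan is to prove the two implications separately, using only the triangle inequality and the fact that in a connected unweighted graph $\dist(a,b)$ is the length of a shortest $a$--$b$ path. We assume throughout that $G$ is connected, so that $\dist(a,b)$ is finite.

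For the forward direction, suppose some vertex $x$ lies in $\Ball(a,p)\cap \Ball(b,q)$. Then $\dist(a,x)\le p$ and $\dist(x,b)\le q$. The triangle inequality gives
\[
  \dist(a,b)\le \dist(a,x)+\dist(x,b)\le p+q .
\]

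For the reverse direction, suppose $d=\dist(a,b)\le p+q$, and fix a shortest path $a=y_0,y_1,\ldots,y_d=b$. Every subpath of a shortest path is itself a shortest path, so
\[
  \dist(a,y_i)=i \qquad\text{and}\qquad \dist(y_i,b)=d-i
\]
for every $i$. We now choose the witness index $i=\min(p,d)$ and check both memberships.

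\textbf{Membership in $\Ball(a,p)$.} Since $i\le p$, we have $\dist(a,y_i)=i\le p$.

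\textbf{Membership in $\Ball(b,q)$.} There are two cases.
\begin{itemize}
  \item If $i=p$, then $\dist(y_i,b)=d-p\le q$, using the hypothesis $d\le p+q$.
  \item If $i=d$, then $y_i=b$, so $\dist(y_i,b)=0\le q$.
\end{itemize}
In either case $y_i\in \Ball(a,p)\cap \Ball(b,q)$, so the intersection is nonempty.

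The only point that needs any care is this choice of witness vertex along the geodesic. Once $i=\min(p,d)$ is fixed, nonnegativity of $p$ and $q$ makes both distance bounds immediate, so no step here is a genuine obstacle.
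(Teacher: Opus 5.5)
Your proof is correct and follows the paper's argument: the triangle inequality for the forward direction, and a witness vertex on a shortest $a$--$b$ path for the converse. Your explicit index $i=\min(p,d)$, with its two-case check, simply makes precise the paper's instruction to ``choose a vertex $z$ on it at distance at most $p$ from $a$ and at distance at most $q$ from $b$.''
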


\begin{proof}
If $z$ lies in both balls, then $\dist(a,b)\le \dist(a,z)+\dist(z,b)\le p+q$. Conversely, if $\dist(a,b)\le p+q$, take a shortest $a$-$b$ path and choose a vertex $z$ on it at distance at most $p$ from $a$ and at distance at most $q$ from $b$.
\end{proof}

\begin{lemma}
\label{lem:tight}
Let $X=\Ball(a,p)$ and $Y=\Ball(b,q)$ be disjoint balls. If there is an edge of $G$ between $X$ and $Y$, then
\[
  \dist(a,b)=p+q+1.
\]
Conversely, if $\dist(a,b)=p+q+1$, then there is an edge between $X$ and $Y$.
\end{lemma}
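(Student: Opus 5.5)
We prove the two directions separately, using Lemma~\ref{lem:intersection} for the lower bound and a shortest path for the converse.

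\emph{Forward direction.} Suppose $X=\Ball(a,p)$ and $Y=\Ball(b,q)$ are disjoint and there is an edge $xy$ with $x\in X$ and $y\in Y$.

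For the upper bound, the triangle inequality gives
\[
  \dist(a,b)\le \dist(a,x)+1+\dist(y,b)\le p+q+1.
\]
For the lower bound, disjointness together with Lemma~\ref{lem:intersection} (applied with radii $p$ and $q$) gives $\dist(a,b)>p+q$, that is, $\dist(a,b)\ge p+q+1$. Combining the two bounds yields $\dist(a,b)=p+q+1$.

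\emph{Converse.} Assume $\dist(a,b)=p+q+1$, and fix a shortest $a$--$b$ path $a=z_0,z_1,\ldots,z_{p+q+1}=b$.

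Since subpaths of shortest paths are shortest, $\dist(a,z_p)=p$, so $z_p\in X$. Likewise $\dist(z_{p+1},b)=(p+q+1)-(p+1)=q$, so $z_{p+1}\in Y$. The edge $z_pz_{p+1}$ therefore joins $X$ and $Y$.

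If one wants to record it, $X$ and $Y$ are automatically disjoint here, by Lemma~\ref{lem:intersection}, since $\dist(a,b)>p+q$. The converse does not need this fact.

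\emph{Main obstacle.} There is essentially none; the argument is routine. The only point needing care is that the lower bound in the forward direction really uses the disjointness hypothesis. Without it, an edge between two overlapping balls gives no lower bound on $\dist(a,b)$.
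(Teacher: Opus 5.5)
Your proof is correct and follows the paper's proof. Disjointness plus Lemma~\ref{lem:intersection} gives the lower bound, the triangle inequality through the edge gives the upper bound, and for the converse you use the edge between the vertices at distances $p$ and $p+1$ from $a$ on a shortest $a$--$b$ path, explicitly checking the step $\dist(z_{p+1},b)=q$ that the paper leaves implicit.
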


\begin{proof}
Since $X$ and $Y$ are disjoint, Lemma~\ref{lem:intersection} gives $\dist(a,b)>p+q$. If there is an edge $xy$ with $x\in X$ and $y\in Y$, then
\[
  \dist(a,b)\le \dist(a,x)+1+\dist(y,b)\le p+q+1.
\]
Thus equality holds. Conversely, if $\dist(a,b)=p+q+1$, then on a shortest $a$-$b$ path, the edge between the vertex at distance $p$ from $a$ and the next vertex is an edge from $X$ to $Y$.
\end{proof}

\section{An anchor-free path-case algorithm}
\label{sec:path}

Let $H=(V,E)$ be a connected graph on $s$ vertices. In this section we compute $\PathCost(H)$ in $O(s^3)$ time. If $s=1$, we return the zero broadcast by the convention $\PathCost(K_1)=0$. Therefore assume for the remainder of this section that $s\ge 2$.

Throughout this section all distances, balls, and neighborhoods are taken in $H$ unless explicitly stated otherwise. Let $\rho=\rad(H)$. Since $s\ge 2$, we have $\rho\ge 1$. A central broadcast of radius $\rho$ is a feasible path-shaped broadcast, since its domination graph has one vertex. Hence $\PathCost(H)\le \rho$. Therefore no minimum-cost path-shaped broadcast uses a power greater than $\rho$, since such a single power would already make the total cost exceed $\rho$. Thus it suffices to consider balls $(v,p)$ with $v\in V$ and $1\le p\le \rho$.

\subsection{Oriented balls}

For a ball $b=(v,p)$, write
\[
  X_b=\Ball(v,p),
  \qquad
  R_b^0=V\setminus X_b.
\]
Let $\kappa(b)$ be the number of connected components of $H[R_b^0]$. We only keep balls with $\kappa(b)\le 2$. Intuitively, because the target domination graph forms a path, deleting any internal broadcast ball can partition the remaining active balls into at most two directions---left and right. Any third residual component would contain no active balls and thus remain completely undominated.

\begin{definition}[State]
A \emph{state} is a tuple
\[
  \sigma=(b,L_\sigma,R_\sigma),
  \qquad b=(v,p),
\]
constructed as follows.
\begin{itemize}[leftmargin=2em]
  \item If $H[V\setminus X_b]=\emptyword$, create one radial state with $L_\sigma=R_\sigma=\emptyword$.
  \item If $H[V\setminus X_b]$ is connected with component $C$, create two endpoint states $(b,\emptyword,C)$ and $(b,C,\emptyword)$.
  \item If $H[V\setminus X_b]$ has exactly two connected components $C_1,C_2$, create two internal states $(b,C_1,C_2)$ and $(b,C_2,C_1)$.
\end{itemize}
The ball, center, and radius of a state $\sigma$ are denoted $X_\sigma$, $c_\sigma$, and $r_\sigma$.
\end{definition}

The intended interpretation is that $L_\sigma$ is the part of the path already placed to the left of $X_\sigma$, and $R_\sigma$ is the part that remains to the right.

For a state $\sigma$, define its left and right frontiers by
\[
  F^L_\sigma=N(X_\sigma)\cap L_\sigma,
  \qquad
  F^R_\sigma=N(X_\sigma)\cap R_\sigma.
\]

\begin{figure}[t]
\centering
\begin{tikzpicture}[
  every node/.style={font=\small},
  component/.style={
    draw,
    rounded corners=22pt,
    fill=gray!10,
    minimum width=2.45cm,
    minimum height=1.45cm
  },
  ball/.style={
    draw,
    circle,
    fill=blue!10,
    minimum size=1.55cm
  },
  frontier/.style={
    draw=orange!85!black,
    fill=orange!28,
    rounded corners=4pt,
    minimum width=.16cm,
    minimum height=.9cm
  },
  center/.style={circle, fill=black, inner sep=1.15pt},
  leader/.style={-{Latex[length=1.8mm]}, thin},
  contact/.style={line width=.7pt, densely dashed},
  labelnode/.style={
    font=\footnotesize,
    align=center,
    fill=white,
    inner sep=1.5pt
  },
  note/.style={
    font=\scriptsize,
    align=center,
    fill=white,
    inner sep=1.5pt
  }
]

  \node[component] (L) at (-3.15,0) {};
  \node[ball]      (X) at (0,0) {};
  \node[component] (R) at (3.15,0) {};

  \node[frontier] (FL) at (-2.05,0) {};
  \node[frontier] (FR) at (2.05,0) {};

  \node[center] (c) at (0,0) {};

  \draw[contact] (FL.east) -- (X.west);
  \draw[contact] (X.east) -- (FR.west);

  \node[labelnode] (labL) at (-3.15,1.35) {$L_\sigma$};
  \node[labelnode] (labX) at (0,1.35) {$X_\sigma=\Ball(c_\sigma,r_\sigma)$};
  \node[labelnode] (labR) at (3.15,1.35) {$R_\sigma$};

  \draw[leader] (labL.south) -- (L.north);
  \draw[leader] (labX.south) -- (X.north);
  \draw[leader] (labR.south) -- (R.north);

  \node[labelnode] (labc) at (0,-1.15) {$c_\sigma$};
  \draw[leader] (labc.north) -- (c.south);

  \node[labelnode] (labFL) at (-4.45,-1.05)
    {$F^L_\sigma$\\[-1pt]$=N(X_\sigma)\cap L_\sigma$};

  \node[labelnode] (labFR) at (4.45,-1.05)
    {$F^R_\sigma$\\[-1pt]$=N(X_\sigma)\cap R_\sigma$};

  \draw[leader] (labFL.east) -- (FL.south west);
  \draw[leader] (labFR.west) -- (FR.south east);

  \begin{scope}[on background layer]
    \node[
      draw,
      dashed,
      rounded corners=24pt,
      fit=(L)(X)(R)(labL)(labX)(labR),
      inner sep=.3cm,
      label={[font=\small]above:$H$}
    ] {};
  \end{scope}

  \node[note] at (0,-1.9)
    {Orange bands denote frontier sets};

\end{tikzpicture}
\caption{An oriented state $\sigma=(b,L_\sigma,R_\sigma)$, where $b=(c_\sigma,r_\sigma)$ and $X_\sigma=\Ball(c_\sigma,r_\sigma)$. The sets $L_\sigma$ and $R_\sigma$ are the two oriented residual sides of $H[V(H)\setminus X_\sigma]$.}
\label{fig:oriented-state}
\end{figure}

\subsection{The state digraph}

We build a weighted directed graph $D(H)$. Its vertices are all states. The weight of a state $\sigma$ is $r_\sigma$.

There is an arc $\sigma\to\tau$ if and only if all of the following conditions hold:
\begin{enumerate}[label=(A\arabic*),leftmargin=3em]
  \item $R_\sigma\ne\emptyword$ and $L_\tau\ne\emptyword$.
  \item The two balls are distance-tight:
  \[
    \dist(c_\sigma,c_\tau)=r_\sigma+r_\tau+1.
  \]
  \item $c_\tau\in R_\sigma$ and $c_\sigma\in L_\tau$.
  \item The exposed right frontier of $\sigma$ is covered by $\tau$, and the exposed left frontier of $\tau$ is covered by $\sigma$:
  \[
    F^R_\sigma\subseteq X_\tau,
    \qquad
    F^L_\tau\subseteq X_\sigma.
  \]
\end{enumerate}

\begin{figure}[t]
\centering
\begin{tikzpicture}[
  every node/.style={font=\small},
  component/.style={
    draw,
    rounded corners=18pt,
    fill=gray!10,
    minimum width=1.85cm,
    minimum height=1.2cm
  },
  largecomponent/.style={
    draw,
    rounded corners=22pt,
    fill=gray!7,
    minimum width=5.8cm,
    minimum height=1.45cm
  },
  ball/.style={
    draw,
    circle,
    fill=blue!10,
    minimum size=1.45cm
  },
  frontier/.style={
    draw=orange!85!black,
    fill=orange!30,
    rounded corners=4pt,
    minimum width=.15cm,
    minimum height=.72cm
  },
  center/.style={circle, fill=black, inner sep=1.05pt},
  leader/.style={-{Latex[length=1.8mm]}, thin},
  arrow/.style={-{Latex[length=2.3mm]}, thick},
  contact/.style={line width=.7pt, densely dashed},
  labelnode/.style={
    font=\footnotesize,
    align=center,
    fill=white,
    inner sep=1.5pt
  },
  rowlabel/.style={
    font=\footnotesize,
    align=right,
    fill=white,
    inner sep=1.5pt
  },
  formula/.style={
    font=\footnotesize,
    align=center,
    fill=white,
    inner sep=2pt
  }
]

  \def\ytop{1.55}
  \def\ybot{-1.65}

  \node[component]      (LsT) at (-5.40,\ytop) {};
  \node[ball]           (XsT) at (-3.15,\ytop) {};
  \node[largecomponent] (RsT) at (1.45,\ytop) {};
  \node[ball]           (XtT) at (-0.55,\ytop) {};
  \node[component]      (RtT) at (2.75,\ytop) {};

  \node[frontier] (FRs) at (-1.08,\ytop) {};
  \draw[contact] (XsT.east) -- (FRs.west);

  \node[center] (csT) at (-3.15,\ytop) {};
  \node[center] (ctT) at (-0.55,\ytop) {};

  \node[rowlabel] at (-7.05,\ytop)
    {view from $\sigma$:\\delete $X_\sigma$};

  \node[labelnode] (labXsT) at (-3.15,2.72) {$X_\sigma$};
  \node[labelnode] (labXtT) at (-0.55,2.72) {$X_\tau\subseteq R_\sigma$};
  \node[labelnode] (labRsT) at (3.65,2.48) {$R_\sigma$};
  \node[labelnode] (labFRs) at (-0.15,.35)
    {$F^R_\sigma=N(X_\sigma)\cap R_\sigma$\\[-1pt]$\subseteq X_\tau$};

  \draw[leader] (labXsT.south) -- (XsT.north);
  \draw[leader] (labXtT.south) -- (XtT.north);
  \draw[leader] (labRsT.south west) -- (RsT.north east);
  \draw[leader] (labFRs.north west) -- (FRs.south);

  \node[largecomponent] (LtB) at (-3.55,\ybot) {};
  \node[component]      (LsB) at (-5.40,\ybot) {};
  \node[ball]           (XsB) at (-3.15,\ybot) {};
  \node[ball]           (XtB) at (-0.55,\ybot) {};
  \node[component]      (RtB) at (2.75,\ybot) {};

  \node[frontier] (FLt) at (-2.62,\ybot) {};
  \draw[contact] (FLt.east) -- (XtB.west);

  \node[rowlabel] at (-7.05,\ybot)
    {view from $\tau$:\\delete $X_\tau$};

  \node[labelnode] (labLtB) at (-5.05,-.45) {$L_\tau$};
  \node[labelnode] (labXsB) at (-2.92,-.45) {$X_\sigma\subseteq L_\tau$};
  \node[labelnode] (labXtB) at (-0.55,-.45) {$X_\tau$};
  \node[labelnode] (labFLt) at (-3.35,-3.05)
    {$F^L_\tau=N(X_\tau)\cap L_\tau$\\[-1pt]$\subseteq X_\sigma$};

  \draw[leader] (labLtB.south) -- (LtB.north west);
  \draw[leader] (labXsB.south) -- (XsB.north);
  \draw[leader] (labXtB.south) -- (XtB.north);
  \draw[leader] (labFLt.north) -- (FLt.south);

  \draw[arrow] (-3.15,3.25) -- node[above] {$\sigma\to\tau$} (-0.55,3.25);

  \node[formula] at (2.6,-.15)
    {$\dist(c_\sigma,c_\tau)=r_\sigma+r_\tau+1$};

  \node[formula] at (2.6,-.55)
    {contact strokes are schematic};

  \begin{scope}[on background layer]
    \node[
      draw,
      dashed,
      rounded corners=18pt,
      fit=(LsT)(XsT)(RsT)(labXsT)(labXtT)(labRsT),
      inner sep=.25cm
    ] {};

    \node[
      draw,
      dashed,
      rounded corners=18pt,
      fit=(LtB)(XtB)(RtB)(labFLt),
      inner sep=.25cm
    ] {};
  \end{scope}

\end{tikzpicture}
\caption{A directed arc $\sigma\to\tau$ shown through the two relevant residual decompositions. In the $\sigma$-decomposition, $X_\tau$ lies in the right residual side $R_\sigma$ and covers the exposed right frontier of $\sigma$. In the $\tau$-decomposition, $X_\sigma$ lies in the left residual side $L_\tau$ and covers the exposed left frontier of $\tau$.}
\label{fig:state-arc}
\end{figure}

A state with $L_\sigma=\emptyword$ is a \emph{source state}. A state with $R_\sigma=\emptyword$ is a \emph{sink state}. A radial state is both a source and a sink.

\begin{lemma}
\label{lem:containment}
If $\sigma\to\tau$ is an arc of $D(H)$, then
\[
  X_\tau\subseteq R_\sigma,
  \qquad
  X_\sigma\subseteq L_\tau.
\]
\end{lemma}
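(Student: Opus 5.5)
By symmetry it suffices to prove $X_\tau\subseteq R_\sigma$; the inclusion $X_\sigma\subseteq L_\tau$ follows by the mirror argument with the roles of left and right exchanged. The plan is to show that $X_\tau$ is disjoint from $X_\sigma$, induces a connected subgraph, and meets $R_\sigma$. Then $X_\tau$ lies entirely in one component of $H[V\setminus X_\sigma]$. That component must be $R_\sigma$, since $R_\sigma$ is itself a connected component of $H[V\setminus X_\sigma]$ by the definition of a state.

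\textbf{Step 1: disjointness.} By (A2), $\dist(c_\sigma,c_\tau)=r_\sigma+r_\tau+1>r_\sigma+r_\tau$. Lemma~\ref{lem:intersection} then gives $X_\sigma\cap X_\tau=\emptyword$. Consequently $X_\tau\subseteq V\setminus X_\sigma = L_\sigma\cup R_\sigma\cup(\text{other components})$. Other components do not exist, since every state has $\kappa\le 2$ and $L_\sigma,R_\sigma$ exhaust the residual.

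\textbf{Step 2: connectivity of the ball.} The set $X_\tau=\Ball(c_\tau,r_\tau)$ induces a connected subgraph of $H$. Indeed, for any $x\in X_\tau$, every vertex on a shortest $c_\tau$--$x$ path is at distance at most $\dist(c_\tau,x)\le r_\tau$ from $c_\tau$. So each such path lies inside $X_\tau$, and every vertex of $X_\tau$ is joined to $c_\tau$ within $H[X_\tau]$.

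\textbf{Step 3: conclusion.} By Step 1, $H[X_\tau]$ is a connected subgraph of $H[V\setminus X_\sigma]$, so it lies within a single component of the latter. By (A3), $c_\tau\in R_\sigma$, so that component is $R_\sigma$, and hence $X_\tau\subseteq R_\sigma$. The mirror argument uses $c_\sigma\in L_\tau$ from (A3), the same disjointness, and the connectivity of $X_\sigma$, and gives $X_\sigma\subseteq L_\tau$.

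I do not expect a real obstacle here. The only point requiring care is Step 2, together with checking that $L_\sigma$ and $R_\sigma$ are genuine connected components rather than arbitrary unions of components. That holds by the state definition, and conditions (A1) and (A4) are not needed at all.
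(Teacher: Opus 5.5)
Your proposal is correct and follows the paper's proof essentially step for step. You derive disjointness of $X_\sigma$ and $X_\tau$ from (A2) via Lemma~\ref{lem:intersection}, use connectivity of the ball $X_\tau$, and use $c_\tau\in R_\sigma$ from (A3) to force $X_\tau$ into the component $R_\sigma$, with the other inclusion by symmetry; your extra details (the shortest-path argument for why a ball is connected, and the remark that (A1) and (A4) are not used) are accurate but do not change the route.
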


\begin{proof}
By condition (A2), Lemma~\ref{lem:intersection} gives $X_\sigma\cap X_\tau=\emptyword$. The ball $X_\tau$ is connected and contains $c_\tau\in R_\sigma$. Since $R_\sigma$ is a connected component of $H[V\setminus X_\sigma]$, the whole ball $X_\tau$ lies in $R_\sigma$. The proof of $X_\sigma\subseteq L_\tau$ is symmetric.
\end{proof}

\begin{lemma}
\label{lem:dag}
The digraph $D(H)$ is acyclic. More precisely, every arc $\sigma\to\tau$ satisfies
\[
  |L_\tau|>|L_\sigma|.
\]
\end{lemma}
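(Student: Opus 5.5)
The plan is to prove the stronger containment
\[
  L_\sigma\cup X_\sigma\subseteq L_\tau
\]
for every arc $\sigma\to\tau$. Since $L_\sigma$ and $X_\sigma$ are disjoint by definition and $X_\sigma\ni c_\sigma$ is nonempty, this containment gives $|L_\tau|\ge |L_\sigma|+|X_\sigma|>|L_\sigma|$. Acyclicity then follows at once: along any directed walk the integer $|L|$ strictly increases, so no directed cycle can exist.

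To prove the containment, I would first invoke Lemma~\ref{lem:containment}, which gives $X_\tau\subseteq R_\sigma$ and $X_\sigma\subseteq L_\tau$. Next I would check that $L_\sigma\cap X_\tau=\emptyword$. This holds because $L_\sigma$ and $R_\sigma$ are distinct components of $H[V\setminus X_\sigma]$ (or $L_\sigma=\emptyword$), and $X_\tau\subseteq R_\sigma$. Hence the set $S=L_\sigma\cup X_\sigma$ lies entirely in $V\setminus X_\tau$.

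The remaining step is to show that $H[S]$ is connected, which I expect to be the only point needing care. The ball $X_\sigma$ induces a connected subgraph, since every vertex is joined to $c_\sigma$ by a shortest path staying inside the ball. If $L_\sigma\neq\emptyword$, then $L_\sigma$ is a connected component of $H-X_\sigma$. Because $H$ is connected, some edge joins $L_\sigma$ to $V\setminus L_\sigma$, and such an edge cannot go to another residual component, so it must go to $X_\sigma$. Thus $H[S]$ is connected.

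Being connected, avoiding $X_\tau$, and containing $c_\sigma\in L_\tau$ (condition (A3)), the set $S$ lies in the component of $H[V\setminus X_\tau]$ containing $c_\sigma$, namely $L_\tau$. This establishes $L_\sigma\cup X_\sigma\subseteq L_\tau$ and completes the argument.
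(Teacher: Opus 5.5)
Your proposal is correct and follows the paper's proof essentially step for step. Like the paper, you prove $L_\sigma\cup X_\sigma\subseteq L_\tau$ from Lemma~\ref{lem:containment}, the fact that this set avoids $X_\tau$, and the connectivity of $L_\sigma\cup X_\sigma$. Your argument that $L_\sigma$ is joined to $X_\sigma$ by an edge, using connectivity of $H$, supplies a justification the paper only asserts.
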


\begin{proof}
By Lemma~\ref{lem:containment}, $X_\sigma\subseteq L_\tau$ and $X_\tau\subseteq R_\sigma$. The set $L_\sigma\cup X_\sigma$ is connected when $L_\sigma$ is nonempty, and the claim is trivial when $L_\sigma$ is empty. Moreover, $L_\sigma\cup X_\sigma$ avoids $X_\tau$. Since $L_\tau$ is the component of $H[V\setminus X_\tau]$ containing $X_\sigma$, we have
\[
  L_\sigma\cup X_\sigma\subseteq L_\tau.
\]
As $X_\sigma$ is nonempty and disjoint from $L_\sigma$, this implies $|L_\tau|>|L_\sigma|$.
\end{proof}

\subsection{Paths in the state digraph are broadcasts}

Let
\[
  P=\sigma_1,\sigma_2,\ldots,\sigma_t
\]
be a directed path in $D(H)$ from a source state to a sink state. Define a broadcast $f_P$ by
\[
  f_P(c_{\sigma_i})=r_{\sigma_i}\quad (1\le i\le t),
\]
and by assigning zero to all other vertices. If the same center appears twice, keep the larger assigned value; however, by Lemma~\ref{lem:dag} and Lemma~\ref{lem:containment}, this never happens on a simple directed path.

\begin{lemma}
\label{lem:soundness}
For every directed path $P$ in $D(H)$ from a source state to a sink state, $f_P$ is an efficient dominating broadcast of $H$, and $(H)_{f_P}$ is a path. Its cost is the sum of the weights of the states in $P$.
\end{lemma}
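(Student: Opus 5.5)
The plan is to prove first, by induction along $P$, that the left sides are exactly the unions of the earlier balls:
\[
  L_{\sigma_i}=X_{\sigma_1}\cup\cdots\cup X_{\sigma_{i-1}}\qquad(1\le i\le t).
\]
Disjointness, domination and the path shape of $(H)_{f_P}$ then all follow from this identity.

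\textbf{The identity.} The base case is $L_{\sigma_1}=\emptyword$, which holds because $\sigma_1$ is a source state. For the inductive step, take an arc $\sigma\to\tau$ and put $A=L_\sigma\cup X_\sigma$. The proof of Lemma~\ref{lem:dag} already gives $A\subseteq L_\tau$. For the reverse inclusion, suppose some $y\in L_\tau\setminus A$. Since $L_\tau$ is connected and contains $X_\sigma$, there is a path inside $L_\tau$ from $X_\sigma$ to $y$. Its first vertex outside $A$ is some $z\in N(A)$ with $z\notin X_\tau$. We locate $z$ as follows.
\begin{itemize}[leftmargin=2em]
  \item Since $L_\sigma$ is a component of $H[V\setminus X_\sigma]$, we have $N(L_\sigma)\subseteq X_\sigma$. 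Hence $z\in N(X_\sigma)\setminus L_\sigma$.
  \item Every state has $\kappa\le 2$, so $V\setminus X_\sigma=L_\sigma\cup R_\sigma$; this also covers the endpoint case $L_\sigma=\emptyword$. Therefore $z\in N(X_\sigma)\cap R_\sigma=F^R_\sigma$.
  \item Condition (A4) gives $F^R_\sigma\subseteq X_\tau$, so $z\in X_\tau$, a contradiction.
\end{itemize}
Hence $L_\tau=L_\sigma\cup X_\sigma$, and the identity follows by induction.

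\textbf{Efficiency and domination.} For $i<j$ the identity gives $X_{\sigma_i}\subseteq L_{\sigma_j}$, which is disjoint from $X_{\sigma_j}$. So the balls are pairwise disjoint. In particular the centers are distinct, $f_P$ is well defined, and $f_P$ is efficient. Because $\sigma_t$ is a sink state, $R_{\sigma_t}=\emptyword$. This gives
\[
  V=L_{\sigma_t}\cup X_{\sigma_t}=\bigcup_{i=1}^t X_{\sigma_i},
\]
so $f_P$ dominates $H$. The cost of $f_P$ is $\sum_i r_{\sigma_i}$, the sum of the state weights. The degenerate case $t=1$ is a single state that is both a source and a sink, i.e.\ a radial state whose ball is all of $V$; its domination graph is a one-vertex path.

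\textbf{Path structure.} Consecutive balls $X_{\sigma_i},X_{\sigma_{i+1}}$ are adjacent in $(H)_{f_P}$: they are disjoint and distance-tight by (A2), so the converse part of Lemma~\ref{lem:tight} supplies an edge between them. For non-consecutive $i<j$ with $j\ge i+2$, set $k=j-1$.
\begin{itemize}[leftmargin=2em]
  \item By the identity, $X_{\sigma_i}\subseteq L_{\sigma_k}$.
  \item By Lemma~\ref{lem:containment} applied to the arc $\sigma_k\to\sigma_j$, $X_{\sigma_j}\subseteq R_{\sigma_k}$.
  \item $L_{\sigma_k}$ and $R_{\sigma_k}$ are distinct components of $H[V\setminus X_{\sigma_k}]$, so there is no edge between them.
\end{itemize}
Hence $X_{\sigma_i}$ and $X_{\sigma_j}$ are not adjacent, and $(H)_{f_P}$ is exactly the path $c_{\sigma_1},\ldots,c_{\sigma_t}$.

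The main obstacle is the reverse inclusion $L_\tau\subseteq L_\sigma\cup X_\sigma$. It is the only place where the frontier condition (A4) and the restriction $\kappa\le 2$ are used, and it is what makes the broadcast dominating. The remaining steps are bookkeeping with Lemmas~\ref{lem:tight} and~\ref{lem:containment}.
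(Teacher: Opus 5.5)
Your proof is correct, but it rests on a different key invariant than the paper's proof.

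\textbf{The paper's route.} It uses only the forward inclusion $L_{\sigma_i}\cup X_{\sigma_i}\subseteq L_{\sigma_{i+1}}$ from Lemma~\ref{lem:dag}. It then proves a separate rightward containment $X_{\sigma_j}\subseteq R_{\sigma_i}$ for $i<j$ by induction along $P$. Domination is shown by contradiction: a vertex $y\in N(S)\setminus S$ adjacent to some $X_{\sigma_i}$ would lie in $F^L_{\sigma_i}$ or $F^R_{\sigma_i}$, and the two halves of (A4) put it back into $S$. Non-adjacency of $X_{\sigma_i}$ and $X_{\sigma_j}$ for $j\ge i+2$ comes from $F^R_{\sigma_i}\subseteq X_{\sigma_{i+1}}$ together with pairwise disjointness.

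\textbf{Your route.} You prove the exact identity $L_{\sigma_i}=\bigcup_{j<i}X_{\sigma_j}$. Its reverse inclusion is the only place (A4) enters, and only through the half $F^R_\sigma\subseteq X_\tau$. Domination then follows directly from the sink condition $R_{\sigma_t}=\emptyword$. Non-adjacency follows because $X_{\sigma_{j-1}}$ separates the distinct components $L_{\sigma_{j-1}}$ and $R_{\sigma_{j-1}}$, so you need no rightward-containment induction.

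\textbf{What each buys.} Your approach gives a sharper global invariant. It also makes explicit the use of $\kappa\le 2$, which the paper uses silently when it splits $N(X_{\sigma_i})$ into the two frontiers. It further shows that the left-frontier half of (A4) is not needed for soundness. In fact that half is implied: given your identity, $F^L_\tau\subseteq X_\sigma$ holds automatically, since $X_\tau\subseteq R_\sigma$ has no neighbours in $L_\sigma$. The paper's argument is more local and symmetric, and it never has to pin down $L_\tau$ exactly.
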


\begin{proof}
Write $P=\sigma_1,\ldots,\sigma_t$ and $X_i=X_{\sigma_i}$. First note that the proof of Lemma~\ref{lem:dag} gives the following stronger statement for each arc $\sigma_i\to\sigma_{i+1}$:
\[
  L_{\sigma_i}\cup X_i\subseteq L_{\sigma_{i+1}}.
\]
Consequently, if $i<j$, then $X_i\subseteq L_{\sigma_j}$. Since $X_j$ is disjoint from $L_{\sigma_j}$ by definition of a state, the balls $X_1,\ldots,X_t$ are pairwise disjoint.

We next record the rightward and leftward containment of later and earlier balls. If $i<j$, then $X_j\subseteq R_{\sigma_i}$; symmetrically, if $j<i$, then $X_j\subseteq L_{\sigma_i}$. For $j=i+1$ this is Lemma~\ref{lem:containment}. If $X_j\subseteq R_{\sigma_i}$ and $j<t$, then $X_j$ is adjacent to $X_{j+1}$ by condition (A2) and Lemma~\ref{lem:tight}, and $X_{j+1}$ is disjoint from $X_i$ by the preceding paragraph. Since $R_{\sigma_i}$ is a connected component of $H[V\setminus X_i]$, the connected set $X_{j+1}$ must also lie in $R_{\sigma_i}$. Induction proves the rightward claim, and the leftward claim is symmetric.

Thus the broadcast $f_P$ is efficient once we prove that it dominates $H$.

Let
\[
  S=\bigcup_{i=1}^t X_i.
\]
We prove $S=V$. Suppose not, and choose a vertex $z\in V\setminus S$. Since $H$ is connected, there is a path from $z$ to $S$. Let $y$ be the last vertex outside $S$ on such a path and let $x$ be the next vertex, so $x\in S$ and $y\in N(S)$. Choose $i$ with $x\in X_i$. Then $y\in N(X_i)$ and $y\notin X_i$.

If $t=1$, then the unique state is both a source and a sink, hence it is radial and $X_1=V$, a contradiction. Assume $t\ge 2$. If $i=1$, then $L_{\sigma_1}=\emptyword$, so $y\in F^R_{\sigma_1}$. The arc $\sigma_1\to\sigma_2$ gives $F^R_{\sigma_1}\subseteq X_2\subseteq S$, contradicting $y\notin S$. The case $i=t$ is symmetric. If $1<i<t$, then $y$ belongs either to $F^L_{\sigma_i}$ or to $F^R_{\sigma_i}$. The arc $\sigma_{i-1}\to\sigma_i$ gives $F^L_{\sigma_i}\subseteq X_{i-1}$, while the arc $\sigma_i\to\sigma_{i+1}$ gives $F^R_{\sigma_i}\subseteq X_{i+1}$. In either case $y\in S$, again a contradiction. Thus $S=V$.

It remains to verify that the domination graph is a path. Consecutive balls $X_i$ and $X_{i+1}$ are adjacent by condition (A2) and Lemma~\ref{lem:tight}. No nonconsecutive balls are adjacent. To see this, suppose $i+1<j$ and there is an edge between $X_i$ and $X_j$. By the monotonic containment above, $X_j\subseteq R_{\sigma_i}$, so the endpoint of this edge in $X_j$ lies in $F^R_{\sigma_i}$. But the arc $\sigma_i\to\sigma_{i+1}$ gives $F^R_{\sigma_i}\subseteq X_{i+1}$, contradicting the pairwise disjointness of $X_{i+1}$ and $X_j$. Therefore the only edges of the domination graph are between consecutive balls in $P$, so $(H)_{f_P}$ is a path.

The cost is exactly $\sum_i r_{\sigma_i}$ by the definition of $f_P$.
\end{proof}

\subsection{Every path broadcast appears}

\begin{lemma}
\label{lem:completeness}
Let $f$ be a minimum-cost efficient dominating broadcast of $H$ among those whose domination graph is a path. Then there is a source-to-sink directed path $P$ in $D(H)$ whose state weights sum to $\cost(f)$.
\end{lemma}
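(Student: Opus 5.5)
Let $f$ be as in the statement, write its active balls in path order $X_1,\ldots,X_t$ with $X_i=\Ball(v_i,p_i)$, and let $b_i=(v_i,p_i)$. The goal is to attach to each $b_i$ one state $\sigma_i$ and to check that $\sigma_1\to\sigma_2\to\cdots\to\sigma_t$ is a source-to-sink path in $D(H)$. Its weight is then $\sum_i p_i=\cost(f)$. Two preliminary points come first. Every $p_i\le\rho$, because $\cost(f)\le\PathCost(H)\le\rho$. If $t=1$, then $X_1=V$, and the radial state of $b_1$ is a one-vertex source-to-sink path. So assume $t\ge 2$.

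The first step is to describe $H[V\setminus X_i]$. Define $A_i=\bigcup_{j<i}X_j$ and $B_i=\bigcup_{j>i}X_j$. Since the balls partition $V$ (efficient and dominating), we have $V\setminus X_i=A_i\cup B_i$. Each $A_i$ is connected when nonempty: each ball is connected, and consecutive balls are joined by an edge because $G_f$ is a path. The same holds for $B_i$. There is no edge between $A_i$ and $B_i$, since such an edge would join $X_j$ and $X_k$ with $j<i<k$, which are nonconsecutive in $G_f$. Hence the components of $H[V\setminus X_i]$ are exactly the nonempty sets among $A_i$ and $B_i$. In particular $\kappa(b_i)\le 2$, and $\sigma_i=(b_i,A_i,B_i)$ is one of the states of the definition. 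For $i=1$ it is the endpoint state with empty left side, and for $i=t$ the endpoint state with empty right side. So $\sigma_1$ is a source and $\sigma_t$ is a sink.

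The second step is to verify (A1)--(A4) for each pair $\sigma_i\to\sigma_{i+1}$.
\begin{itemize}[leftmargin=2em]
  \item (A1) holds because $B_i\supseteq X_{i+1}\ne\emptyword$ and $A_{i+1}\supseteq X_i\ne\emptyword$.
  \item (A2) follows from Lemma~\ref{lem:tight}: $X_i$ and $X_{i+1}$ are disjoint and joined by an edge.
  \item (A3) holds because $c_{\sigma_{i+1}}=v_{i+1}\in X_{i+1}\subseteq B_i$, and symmetrically $v_i\in A_{i+1}$.
  \item (A4) is the only substantive condition. Take $y\in F^R_{\sigma_i}=N(X_i)\cap B_i$. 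Then $y\in X_k$ for some $k>i$, and $y$ is adjacent to $X_i$. So $X_k$ and $X_i$ are adjacent in $G_f$, which forces $k=i+1$ and gives $y\in X_{i+1}$. The inclusion $F^L_{\sigma_{i+1}}\subseteq X_i$ follows symmetrically.
\end{itemize}

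The main obstacle is the first step: identifying the residual components exactly with $A_i$ and $B_i$, which requires both connectivity and the absence of edges between them. The plan handles it with the two facts used above. Connectivity comes from consecutive adjacency in $G_f$, and separation comes from the absence of nonconsecutive adjacencies. Once this is in place, (A4) is the same nonadjacency argument applied to a single vertex.
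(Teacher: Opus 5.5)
Your proposal is correct and follows the paper's proof essentially step for step: the same decomposition of $V\setminus X_i$ into the earlier and later balls, connectivity from consecutive adjacency, separation and (A4) from the absence of nonconsecutive adjacencies, and (A2) via Lemma~\ref{lem:tight}. You also make explicit a few small points the paper leaves implicit, namely that each $p_i\le\rho$ (so the states actually exist) and the verification of (A1) and (A3).
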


\begin{proof}
Let the active vertices of $f$ be ordered along the domination path as
\[
  a_1,a_2,\ldots,a_t,
\]
and let $p_i=f(a_i)$ and $X_i=\Ball(a_i,p_i)$. Since $f$ is efficient and dominating, the balls $X_1,\ldots,X_t$ are pairwise disjoint and their union is $V(H)$. If $t=1$, then $X_1=V(H)$ and the corresponding radial state is a source-to-sink path of weight $p_1$.

Assume $t\ge 2$. For $1\le i\le t$, define
\[
  L_i=\bigcup_{j<i} X_j,
  \qquad
  R_i=\bigcup_{j>i} X_j.
\]
When nonempty, $L_i$ is connected because consecutive balls $X_j,X_{j+1}$ with $j<i-1$ are adjacent in the domination graph, and the same argument shows that $R_i$ is connected. Moreover, no edge of $H$ joins $L_i$ to $R_i$, since such an edge would give an edge in $H_f$ between nonconsecutive vertices of the domination path. Because the balls partition $V(H)$, the graph $H[V(H)\setminus X_i]$ therefore has exactly the nonempty components among $L_i$ and $R_i$. Thus
\[
  \sigma_i=((a_i,p_i),L_i,R_i)
\]
is a state.

For every $i<t$, the balls $X_i$ and $X_{i+1}$ are disjoint and adjacent, so by Lemma~\ref{lem:tight},
\[
  \dist(a_i,a_{i+1})=p_i+p_{i+1}+1.
\]
Also, every vertex of $N(X_i)\cap R_i$ lies in $X_{i+1}$. Indeed, since the balls partition $V(H)$, such a vertex lies in some $X_j$ with $j>i$; the edge from $X_i$ to this vertex would give an edge from $X_i$ to $X_j$ in $H_f$, so the path order forces $j=i+1$. Hence
\[
  N(X_i)\cap R_i\subseteq X_{i+1}.
\]
The symmetric argument gives
\[
  N(X_{i+1})\cap L_{i+1}\subseteq X_i.
\]
Therefore $\sigma_i\to\sigma_{i+1}$ is an arc of $D(H)$. Since $L_1=\emptyword$ and $R_t=\emptyword$, the sequence $\sigma_1,\ldots,\sigma_t$ is a source-to-sink path of total weight $\cost(f)$.
\end{proof}

\subsection{Computing the state digraph in cubic time}

We now show that $D(H)$ can be built in $O(s^3)$ time.

First compute all-pairs shortest-path distances by running breadth-first search from every vertex. Since $H$ is unweighted and has at most $s^2$ edges, this takes $O(s^3)$ time.

For every center $v$, we compute the connected components of
\[
  H[V\setminus \Ball(v,p)]
\]
for all $p=1,\ldots,\rho$ in total $O(s^2)$ time for this fixed $v$. Process the radii in decreasing order. When moving from radius $p+1$ to radius $p$, add exactly the vertices at distance $p+1$ from $v$ and union them with already active neighbors. To avoid an inverse-Ackermann factor $\alpha(s)$ in the time complexity, we maintain the components using arrays of linked lists and perform union-by-size. Since we only merge components and do not require mid-loop \texttt{find} queries, the total cost of updating component identifiers is bounded by $O(s \log s)$ over all merges for this center. Each vertex and edge is considered only a constant number of times for this center. Scanning all vertices after each radius to record the component label of every active vertex costs $O(s)$ per radius, summing to $O(s^2)$ per center. Repeating over all centers costs $O(s^3)$.

For a ball $b=(v,p)$ and a residual component $C$ of $H[V\setminus X_b]$, define the requirement value
\[
  \Req(v,p,C,w)=
  \max\{\dist(w,z): z\in N(X_b)\cap C\},
\]
with maximum $0$ over the empty set. Then
\[
  N(X_b)\cap C\subseteq \Ball(w,q)
  \quad\Longleftrightarrow\quad
  \Req(v,p,C,w)\le q.
\]
We store requirement values only for residual components of balls $(v,p)$ with $\kappa(v,p)\le 2$, since only these balls generate states. By mapping the (at most two) residual components of each kept ball to a local binary index $c \in \{1, 2\}$, the requirement array $\Req(v,p,c,w)$ requires strictly $O(s^3)$ space. All such requirement values are computed in $O(s^3)$ time as follows. A vertex $z$ belongs to $N(\Ball(v,p))$ exactly when $\dist(v,z)=p+1$. Thus, for every ordered pair $(v,z)$, the relevant radius is uniquely determined as $p=\dist(v,z)-1$. If $1\le p\le\rho$ and $(v,p)$ is a kept ball, find the local component index $c$ of $z$ in $H[V\setminus \Ball(v,p)]$, and for every $w\in V$ update
\[
  \Req(v,p,c,w)\leftarrow
  \max\{\Req(v,p,c,w),\dist(w,z)\}.
\]
There are $O(s^2)$ choices of $(v,z)$ and $O(s)$ choices of $w$.

Finally, enumerate possible arcs. For every ordered triple $(v,w,p)$, set
\[
  q=\dist(v,w)-p-1.
\]
Only this value of $q$ can give a tight contact between $(v,p)$ and $(w,q)$. If $1\le q\le \rho$, inspect the constant number of states over $(v,p)$ and the constant number of states over $(w,q)$. Conditions (A1)--(A3) are checked from the stored component labels and distances; condition (A4) is checked by two requirement comparisons. Hence arc generation takes $O(s^3)$ time.

\begin{algorithm}[t]
\caption{\textsc{Fast-Path-Broadcast}$(H)$}
\label{alg:path}
\begin{algorithmic}[1]
\Require A connected graph $H=(V,E)$.
\Ensure A minimum-cost efficient dominating broadcast whose domination graph is a path, with $\PathCost(K_1)=0$ by convention.
\If{$|V|=1$}
  \State Return the zero broadcast of cost $0$.
\EndIf
\State Compute all-pairs distances and $\rho=\rad(H)$.
\State For every ball $(v,p)$, $1\le p\le\rho$, compute the components of $H[V\setminus \Ball(v,p)]$.
\State Create all states: radial states, endpoint states, and internal oriented states.
\State Compute all values $\Req(v,p,c,w)$.
\State Build the state digraph $D(H)$ by enumerating triples $(v,w,p)$ and setting $q=\dist(v,w)-p-1$.
\State Topologically order states by increasing $|L_\sigma|$.
\State Initialize $d[\sigma]=r_\sigma$ for source states and $d[\sigma]=+\infty$ otherwise.
\For{states $\sigma$ in topological order}
  \For{each arc $\sigma\to\tau$}
    \State $d[\tau]\gets \min\{d[\tau],d[\sigma]+r_\tau\}$.
  \EndFor
\EndFor
\State Return a sink state of minimum $d$-value and reconstruct the corresponding broadcast by predecessor pointers.
\end{algorithmic}
\end{algorithm}

\begin{theorem}
\label{thm:path}
Algorithm~\ref{alg:path} computes $\PathCost(H)$ and a corresponding broadcast in $O(s^3)$ time and $O(s^3)$ space.
\end{theorem}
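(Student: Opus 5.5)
The argument splits into correctness and complexity, and most of the work is already done by Lemmas~\ref{lem:dag}, \ref{lem:soundness} and~\ref{lem:completeness}.

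\emph{Correctness.} The case $s=1$ is immediate by convention, so assume $s\ge 2$. First I would justify the restriction to balls of radius $1\le p\le\rho$ with $\kappa\le 2$. The bound $p\le\rho$ follows from $\PathCost(H)\le\rho$, as argued at the start of the section. Power $0$ never arises because active vertices have positive power. The restriction $\kappa\le 2$ is implicitly verified in the proof of Lemma~\ref{lem:completeness}, which shows that every ball of a path-shaped broadcast has residual components exactly among the nonempty sets $L_i,R_i$; hence each such ball generates a state that exists in $D(H)$.

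Next, by Lemma~\ref{lem:dag}, ordering states by increasing $|L_\sigma|$ is a topological order. The dynamic program is therefore the standard minimum-weight path computation in a DAG with vertex weights. By induction along the order, $d[\tau]$ equals the minimum total weight of a directed path from some source state to $\tau$. A radial state is a source initialized to $r_\sigma$, which correctly covers $t=1$.

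Taking the minimum over sink states then gives the minimum weight of a source-to-sink path. This value is at least $\PathCost(H)$, because by Lemma~\ref{lem:soundness} every such path yields a feasible path-shaped efficient broadcast of equal cost. It is at most $\PathCost(H)$, because by Lemma~\ref{lem:completeness} an optimal such broadcast appears as a path. The predecessor pointers recover a path $P$, and $f_P$ is the required broadcast.

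\emph{Complexity.} I would sum the costs of the phases described above.
\begin{itemize}
\item The BFS phase takes $O(s^3)$.
\item The component labelling takes $O(s^2)$ per center, hence $O(s^3)$ overall.
\item Creating states takes $O(s^2)$ balls times a constant number of states each.
\item The $\Req$ table takes $O(s^3)$ time and space.
\item Arc enumeration takes $O(s^3)$ triples, each with $O(1)$ checks.
\end{itemize}
For the arc checks, (A3) is a lookup of component labels of $c_\tau$ in the residual of $\sigma$'s ball and vice versa. For (A4), $F^R_\sigma\subseteq X_\tau$ is exactly $\Req(c_\sigma,r_\sigma,\text{index of }R_\sigma,c_\tau)\le r_\tau$, and symmetrically for $F^L_\tau$. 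Hence $D(H)$ has $O(s^2)$ vertices and $O(s^3)$ arcs.

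The topological order is obtained by bucket-sorting states by $|L_\sigma|\in\{0,\dots,s\}$, which takes $O(s^2)$. The relaxation loop is linear in the size of $D(H)$, so it costs $O(s^3)$. Reconstruction costs $O(s)$. Space is dominated by the distance matrix, the $\Req$ table and the arc lists, all $O(s^3)$.

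\emph{Main obstacle.} The step needing most care is the per-center component computation in $O(s^2)$ with constant-time label lookups afterward. I would rely on the decreasing-radius incremental union-by-size scheme described above. Its relabelling cost is $O(s\log s)$ per center, and recording labels is $O(s)$ per radius, so the total per center is $O(s^2)$. This also yields $\kappa(v,p)$ and the local indices used by $\Req$. Only labels after each full radius step are needed, so no intermediate \texttt{find} queries arise. Storing these labels takes $O(s)$ per ball, hence $O(s^3)$ overall.
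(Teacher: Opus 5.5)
Your proposal is correct and follows the same route as the paper. Correctness comes from Lemmas~\ref{lem:soundness} and~\ref{lem:completeness} together with the dynamic program over the topological order by $|L_\sigma|$ from Lemma~\ref{lem:dag}, and the $O(s^3)$ bounds come from summing the cubic construction phases and the linear-time DP on $O(s^2)$ states and $O(s^3)$ arcs; your extra details (bucket-sorting by $|L_\sigma|$, the inductive invariant for $d[\tau]$, and re-justifying the restriction to $p\le\rho$ and $\kappa\le 2$) only make explicit what the paper's short proof leaves to the preceding construction.
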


\begin{proof}
If $s=1$, the algorithm returns the zero broadcast of cost $0$, which is correct by the convention $\PathCost(K_1)=0$. Assume $s\ge 2$.

By Lemma~\ref{lem:soundness}, every source-to-sink path of $D(H)$ gives a feasible efficient dominating broadcast whose domination graph is a path. By Lemma~\ref{lem:completeness}, every such broadcast appears as a source-to-sink path of the same weight. Thus the minimum source-to-sink path in $D(H)$ has value $\PathCost(H)$.

By Lemma~\ref{lem:dag}, $D(H)$ is acyclic and a topological order is given by increasing $|L_\sigma|$. Therefore dynamic programming computes the minimum source-to-sink path. The preceding construction builds all states and arcs in $O(s^3)$ time and uses $O(s^3)$ space for distances, component labels, and requirement values. The dynamic program takes $O(|V(D)|+|E(D)|)=O(s^3)$ time.
\end{proof}

\section{\texorpdfstring{The $O(n^5)$ general algorithm}{The O(n5) general algorithm}}
\label{sec:general}

We now combine the cubic path-case solver with the peel lemma.

For a candidate ball $(x,k)$ in the original graph $G$, let
\[
  X(x,k)=\Ball_G(x,k),
  \qquad
  G_{x,k}=G[V(G)\setminus X(x,k)].
\]
If $G_{x,k}$ is empty, the candidate cost is $k$. If $G_{x,k}$ consists of a single vertex $y$, the candidate broadcast assigns power $1$ to $y$ and has cost $k+1$. If $G_{x,k}$ is connected and has at least two vertices, run Algorithm~\ref{alg:path} on $G_{x,k}$ and add $k$ to the resulting path cost. The best candidate over all $x$ and $k$ is returned.

The explicit singleton case is necessary only in the outer algorithm. Although we set $\PathCost(K_1)=0$ as a standalone convention, a singleton residual vertex is outside the peeled ball and is not dominated by that ball. Therefore it must receive positive power when the residual solution is extended back to the original graph.

\begin{algorithm}[t]
\caption{\textsc{Fast-Optimal-Broadcast}$(G)$}
\label{alg:general}
\begin{algorithmic}[1]
\Require A connected graph $G=(V,E)$ with $|V|\ge 2$.
\Ensure An optimal broadcast domination of $G$.
\State Compute all-pairs distances in $G$ and $\rad(G)$.
\State $best\gets \rad(G)$; let $bestBroadcast$ be a radial broadcast of cost $\rad(G)$.
\For{each $x\in V$}
  \For{$k=1,2,\ldots,\rad(G)$}
    \State $X\gets \Ball_G(x,k)$ and $H\gets G[V\setminus X]$.
    \If{$H=\emptyword$}
      \If{$k<best$}
        \State $best\gets k$; $bestBroadcast\gets$ the broadcast assigning power $k$ to $x$.
      \EndIf
    \ElsIf{$|V(H)|=1$}
      \State Let $y$ be the unique vertex of $H$.
      \State $C\gets k+1$.
      \If{$C<best$}
        \State $best\gets C$.
        \State $bestBroadcast\gets f$, where $f(x)=k$, $f(y)=1$, and $f(v)=0$ for all other vertices.
      \EndIf
    \ElsIf{$H$ is connected}
      \State \Comment{Disconnected $H$ is safely skipped per Corollary~\ref{cor:peel}}
      \State Run \textsc{Fast-Path-Broadcast}$(H)$, obtaining a broadcast $g$ on $H$.
      \State $C\gets k+\sum_{v\in V(H)}\min\{g(v),\ecc_G(v)\}$.
      \If{$C<best$}
        \State $best\gets C$.
        \State $bestBroadcast\gets f$, where $f(x)=k$, $f(v)=\min\{g(v),\ecc_G(v)\}$ for $v\in V(H)$, and $f(v)=0$ otherwise.
      \EndIf
    \EndIf
  \EndFor
\EndFor
\State Return $bestBroadcast$.
\end{algorithmic}
\end{algorithm}

\begin{lemma}[Every candidate is feasible]
\label{lem:candidatefeasible}
Every broadcast considered by Algorithm~\ref{alg:general} is a dominating broadcast of $G$.
\end{lemma}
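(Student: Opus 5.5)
We go through the candidate types produced by Algorithm~\ref{alg:general} one at a time: the initial radial broadcast, the empty-residual candidates, the singleton-residual candidates, and the path-routine candidates. For each we check two things. First, the powers lie in $\{0,\ldots,\diam(G)\}$, so the function is a broadcast. Second, every vertex of $G$ lies in some ball $\Ball_G(v,f(v))$ with $f(v)>0$.

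The first three types are direct. A central vertex with power $\rad(G)\le\diam(G)$ dominates $V$. When $\Ball_G(x,k)=V$, the broadcast $f(x)=k$ dominates, and $k\le\rad(G)\le\diam(G)$. In the singleton case, $\Ball_G(x,k)$ covers $V\setminus\{y\}$ and $f(y)=1$ covers $y$. We have $1\le\diam(G)$ because $|V|\ge 2$. The vertex $y$ differs from $x$, since $x\in\Ball_G(x,k)$, so the two assignments do not conflict.

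The main case is when $H=G_{x,k}$ is connected with at least two vertices and $g$ is returned by \textsc{Fast-Path-Broadcast}$(H)$. By Theorem~\ref{thm:path} (through Lemma~\ref{lem:soundness}), $g$ is a dominating broadcast of $H$. So every $u\in V(H)$ has some $v\in V(H)$ with $g(v)>0$ and $\dist_H(v,u)\le g(v)$. The set $\Ball_G(x,k)$ handles $V\setminus V(H)$, and $x\notin V(H)$, so $f$ is well defined. The powers are bounded because $\min\{g(v),\ecc_G(v)\}\le\ecc_G(v)\le\diam(G)$ and $k\le\rad(G)$.

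The key step, and the only real obstacle, is showing that truncating to $f(v)=\min\{g(v),\ecc_G(v)\}$ preserves domination. Distances grow under passage to the induced subgraph, so $\dist_G(v,u)\le\dist_H(v,u)\le g(v)$. If $g(v)\le\ecc_G(v)$, then $f(v)=g(v)$ and $u\in\Ball_G(v,f(v))$. Otherwise $f(v)=\ecc_G(v)$, and then $\Ball_G(v,f(v))=V\ni u$. In both cases $f(v)>0$: in the first because $g(v)>0$, and in the second because $\ecc_G(v)\ge1$ when $|V|\ge2$. So every vertex of $V(H)$ is dominated in $G$. Efficiency is not needed here, since the lemma asks only for domination.
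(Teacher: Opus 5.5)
Your proof is correct and uses the same case analysis as the paper. The peeled ball at $x$ covers $\Ball_G(x,k)$, and the singleton $y$ gets power $1$. In the path-routine case, $g$ still dominates $H$ in $G$ because distances in $G$ are no larger than in the induced subgraph $H$, and capping at $\ecc_G(v)$ is harmless since a capped ball is all of $V$. You also check things the paper leaves implicit: the initial radial broadcast, that all powers lie in $\{0,\ldots,\diam(G)\}$, and that capped powers stay positive.
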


\begin{proof}
If $H=\emptyword$, the ball $\Ball_G(x,k)$ covers all of $G$.

If $|V(H)|=1$, let $y$ be the unique vertex of $H$. The algorithm assigns power $1$ to $y$. This is a legal power because $G$ is connected, $|V(G)|\ge 2$, and hence $\diam(G)\ge 1$. The vertex $y$ dominates itself, while the broadcast at $x$ dominates every vertex of $X=\Ball_G(x,k)$. Thus the resulting broadcast dominates $G$.

Otherwise, Algorithm~\ref{alg:path} returns a broadcast $g$ that dominates $H$ using distances in $H$. Since $H$ is an induced subgraph of $G$, distances in $G$ are no larger than distances in $H$. Hence the same powers dominate all vertices of $H$ in $G$. If some residual power exceeds the eccentricity of its center in $G$, capping it at $\ecc_G(v)$ preserves domination in $G$ and can only decrease the cost. The additional broadcast at $x$ dominates every vertex of $X=\Ball_G(x,k)$.
\end{proof}

We also record a small boundary fact that rules out singleton residuals in the optimal peeled instance supplied by Corollary~\ref{cor:peel}.

\begin{lemma}
\label{lem:no-singleton-peel}
Let $G$ be a connected graph with at least two vertices, and let $f$ be an efficient dominating broadcast of $G$. If $x\in V_f$ and $k=f(x)$, then $G[V(G)\setminus\Ball_G(x,k)]$ is not a one-vertex graph.
\end{lemma}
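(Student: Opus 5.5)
We argue by contradiction: suppose the residual $G[V(G)\setminus\Ball_G(x,k)]$ is a single vertex $y$, so that $V(G)=\Ball_G(x,k)\cup\{y\}$. Because $f$ is dominating, $y$ lies in some ball $\Ball_G(u,f(u))$ with $u\in V_f$. Efficiency forces $u\neq x$, since $y\notin\Ball_G(x,k)$. The ball $\Ball_G(u,f(u))$ is disjoint from $\Ball_G(x,k)$ and sits inside $V(G)\setminus\Ball_G(x,k)=\{y\}$. So $\Ball_G(u,f(u))=\{y\}$, which gives $u=y$.

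Next we show that the ball of $y$ is strictly larger than $\{y\}$. We have $f(y)\ge 1$ because $y\in V_f$. Since $G$ is connected with at least two vertices, $y$ has a neighbour $z$. Then $\dist_G(y,z)=1\le f(y)$, so $z\in\Ball_G(y,f(y))$. But $z\ne y$, which contradicts $\Ball_G(y,f(y))=\{y\}$. Equivalently, $z$ lies in $\Ball_G(x,k)$, because it is not $y$, and this violates the disjointness that efficiency requires.

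This contradiction proves the lemma. The only point that needs care is that active vertices have power at least $1$, so every active ball contains all neighbours of its centre; everything else follows directly from the definitions. The argument does not use Lemma~\ref{lem:tight} or Lemma~\ref{lem:intersection}, although Lemma~\ref{lem:intersection} could be cited to phrase the disjointness step.
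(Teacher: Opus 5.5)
Your proof is correct and follows essentially the same route as the paper: efficiency forces the dominator of $y$ to be $y$ itself, and then a neighbour $z$ of $y$, which must lie in $\Ball_G(x,k)$, together with $f(y)\ge 1$ makes the two active balls intersect. One small wording point: $u\neq x$ follows from $y\notin\Ball_G(x,k)$ alone, not from efficiency; efficiency is needed only in the next step, to place $\Ball_G(u,f(u))$ inside $\{y\}$.
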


\begin{proof}
Suppose for a contradiction that $G[V(G)\setminus\Ball_G(x,k)]$ has the single vertex $y$. Since $f$ dominates $G$ and $y\notin\Ball_G(x,k)$, the vertex $y$ is dominated by some active vertex $a\ne x$. The center $a$ cannot lie in $\Ball_G(x,k)$, because then the ball of $a$ would contain $a$ and intersect the ball of $x$, contradicting efficiency. Hence $a=y$.

Because $G$ is connected and has at least two vertices, $y$ has a neighbor $z$. Since $y$ is the only vertex outside $\Ball_G(x,k)$, this neighbor $z$ lies in $\Ball_G(x,k)$. But $f(y)>0$, so $z\in\Ball_G(y,f(y))$. Thus the active balls of $x$ and $y$ intersect at $z$, again contradicting efficiency.
\end{proof}

\begin{lemma}
\label{lem:optconsidered}
Algorithm~\ref{alg:general} considers a candidate of cost $\bd(G)$.
\end{lemma}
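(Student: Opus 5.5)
We plan to invoke Corollary~\ref{cor:peel} to obtain an efficient optimal broadcast $f$ of $G$ and an active vertex $x\in V_f$ with $k=f(x)\ge 1$. Let $G'=G[V(G)\setminus\Ball_G(x,k)]$. The corollary says that $G'$ is either empty, or connected with $\bd(G')=\PathCost(G')$. We first check that the pair $(x,k)$ is actually enumerated by Algorithm~\ref{alg:general}, which requires $1\le k\le\rad(G)$. Since $f$ is optimal and a radial broadcast costs $\rad(G)$, we get $k\le\cost(f)=\bd(G)\le\rad(G)$. So the outer loop reaches $(x,k)$.

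The argument then splits according to the shape of $G'$.

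\begin{itemize}[leftmargin=2em]
  \item \emph{$G'$ is empty.} Then $\Ball_G(x,k)=V(G)$, so any other active ball would intersect it, and efficiency forces $V_f=\{x\}$. The algorithm records a candidate of cost $k=\cost(f)=\bd(G)$.
  \item \emph{$G'$ is a single vertex.} This case is impossible by Lemma~\ref{lem:no-singleton-peel}.
  \item \emph{$G'$ is connected with at least two vertices.} The algorithm runs Algorithm~\ref{alg:path} on $H=G'$. By Theorem~\ref{thm:path} it returns a broadcast $g$ of cost $\PathCost(G')$. After capping at eccentricities, the candidate costs at most $k+\PathCost(G')=k+\bd(G')$.
\end{itemize}

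The main step is the inequality $k+\bd(G')\le\bd(G)$, i.e.\ $\bd(G')\le\cost(f)-k$. To prove it, restrict $f$ to $V(G')$. Every active vertex other than $x$ has its ball disjoint from $\Ball_G(x,k)$ by efficiency, so its center lies in $V(G')$. The restriction has cost exactly $\cost(f)-k$.

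The obstacle is that domination in $G$ need not imply domination in $G'$: distances in the induced subgraph can only grow. We must show that for every $y\in V(G')$ dominated in $G$ by some $a\ne x$, we have $\dist_{G'}(a,y)\le f(a)$. Take a shortest $a$--$y$ path in $G$. Every vertex $z$ on it satisfies $\dist_G(a,z)\le\dist_G(a,y)\le f(a)$, so $z\in\Ball_G(a,f(a))$. That ball is disjoint from $\Ball_G(x,k)$, so the whole path lies in $G'$, and $\dist_{G'}(a,y)=\dist_G(a,y)\le f(a)$. Hence the restriction dominates $G'$, and $\bd(G')\le\cost(f)-k$.

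There is one technical point: the powers of the restriction should not exceed $\diam(G')$. If a power is too large, we cap it at the eccentricity in $G'$, which only lowers the cost and preserves domination, so the inequality still holds.

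Combining the pieces, the candidate at $(x,k)$ costs at most $\bd(G)$. By Lemma~\ref{lem:candidatefeasible} every candidate is a dominating broadcast of $G$, so its cost is also at least $\bd(G)$. Therefore it equals $\bd(G)$, which proves the lemma.
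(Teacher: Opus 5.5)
Your proof is correct and follows the paper's argument. You peel via Corollary~\ref{cor:peel} and exclude the singleton residual. You then bound $\PathCost(G')=\bd(G')$ by the cost of the restriction of $f$, using the fact that shortest paths stay inside an efficient ball. You finish with Lemma~\ref{lem:candidatefeasible}. Your extra checks are sound: $k\le\bd(G)\le\rad(G)$ shows the loop actually reaches $(x,k)$, and capping restricted powers in $G'$ is harmless. Dropping the paper's redundant strict-inequality contradiction step is also fine.
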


\begin{proof}
By Corollary~\ref{cor:peel}, there is an efficient optimal broadcast $f$ and an active vertex $x\in V_f$ such that, with $k=f(x)$, the residual graph
\[
  H=G[V(G)\setminus \Ball_G(x,k)]
\]
is empty, or is connected and satisfies
\[
  \bd(H)=\PathCost(H).
\]
By Lemma~\ref{lem:no-singleton-peel}, this residual graph is not a one-vertex graph.

If $H$ is empty, Algorithm~\ref{alg:general} considers the ball $(x,k)$ and obtains cost
\[
  k=\cost(f)=\bd(G).
\]

Otherwise, $H$ is connected and has at least two vertices. Let $f_H$ be the restriction of $f$ to $V(H)$.
Since $f$ is efficient, $f_H$ dominates $H$: indeed, any shortest path witnessing domination in $G$ by a remaining active vertex cannot enter $\Ball_G(x,k)$, or else two active balls of $f$ would intersect. Hence
\[
  \PathCost(H)=\bd(H)\le \cost(f_H)=\cost(f)-k.
\]
Therefore
\[
  k+\PathCost(H)\le \cost(f).
\]

If this inequality were strict, then combining the broadcast at $x$ with an optimal path broadcast of $H$ gives, after capping residual powers at their eccentricities in $G$ if necessary, a dominating broadcast of $G$ with cost strictly smaller than
\[
  \cost(f)=\bd(G),
\]
contradicting optimality of $f$. Hence
\[
  k+\PathCost(H)=\cost(f)=\bd(G).
\]

The iteration for $(x,k)$ therefore obtains a candidate of cost at most $\bd(G)$. Since every candidate is feasible by Lemma~\ref{lem:candidatefeasible}, no candidate can have cost below $\bd(G)$. Hence this iteration considers a candidate of cost exactly $\bd(G)$.
\end{proof}
\begin{theorem}
\label{thm:general}
Algorithm~\ref{alg:general} computes an optimal broadcast domination of every connected unweighted graph $G$ on $n\ge 2$ vertices in $O(n^5)$ time and $O(n^3)$ working space. Together with the convention $\bd(K_1)=0$, this covers all connected unweighted graphs.
\end{theorem}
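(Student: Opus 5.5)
The proof splits into correctness, running time and space, with the one-vertex case handled by the stated convention $\bd(K_1)=0$.

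\textbf{Correctness.} First, every candidate the algorithm records is a dominating broadcast of $G$, by Lemma~\ref{lem:candidatefeasible}. This includes the initial radial broadcast of cost $\rad(G)$. Hence the returned value is at least $\bd(G)$. Second, Lemma~\ref{lem:optconsidered} shows that some iteration $(x,k)$ produces a candidate of cost exactly $\bd(G)$. That lemma requires $k=f(x)\le\rad(G)$, so that the pair lies in the loop range. I will argue this from $f(x)\le\cost(f)=\bd(G)\le\rad(G)$. Since the algorithm keeps the minimum over all candidates, it returns a broadcast of cost $\bd(G)$, which is optimal. I will also check that every returned power is legal, i.e.\ at most $\diam(G)$:
\begin{itemize}
\item residual powers are capped at $\ecc_G(v)\le\diam(G)$;
\item the peeled power satisfies $k\le\rad(G)\le\diam(G)$;
\item the singleton power $1$ is legal because $\diam(G)\ge1$.
\end{itemize}

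\textbf{Time.} All-pairs distances and $\rad(G)$ are computed once by BFS from every vertex, in $O(n^3)$ time. The double loop has at most $n\cdot\rad(G)\le n^2$ iterations. In each iteration, the following take $O(n^2)$ time:
\begin{itemize}
\item forming $X=\Ball_G(x,k)$ by reading stored distances;
\item building $H=G[V\setminus X]$;
\item testing emptiness, whether $H$ is a single vertex, and connectivity by one BFS;
\item computing the capped cost and assembling the broadcast.
\end{itemize}
The dominant step is the call to \textsc{Fast-Path-Broadcast} on $H$. By Theorem~\ref{thm:path} it costs $O(|V(H)|^3)=O(n^3)$. The total time is therefore $O(n^2\cdot n^3)=O(n^5)$.

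\textbf{Space.} The outer algorithm stores the $O(n^2)$ distance table, the current best broadcast, and the current $H$. Each path-routine call uses $O(n^3)$ space by Theorem~\ref{thm:path}, and this space is released before the next iteration. The working space is therefore $O(n^3)$.

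The only delicate point is making sure the loop bound $k\le\rad(G)$ does not exclude the peeled ball supplied by Corollary~\ref{cor:peel}. This is settled by the cost inequality $f(x)\le\bd(G)\le\rad(G)$ above. The rest is bookkeeping.
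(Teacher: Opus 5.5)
Your proof is correct and follows the paper's argument. Correctness comes from feasibility of every candidate (Lemma~\ref{lem:candidatefeasible}) plus an exact-cost candidate (Lemma~\ref{lem:optconsidered}); the time bound is $O(n^2)$ peel balls times the $O(n^3)$ routine of Theorem~\ref{thm:path}, with the $O(n^3)$ workspace reused. Your explicit check that $k=f(x)\le\bd(G)\le\rad(G)$, so that $(x,k)$ lies in the loop range, is left implicit in the paper's proof of Lemma~\ref{lem:optconsidered} and is a worthwhile addition.
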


\begin{proof}
By Lemma~\ref{lem:candidatefeasible}, the algorithm never returns a value smaller than $\bd(G)$. By Lemma~\ref{lem:optconsidered}, it considers a candidate of value exactly $\bd(G)$. Thus it is correct.

There are $O(n^2)$ candidate balls $(x,k)$, because $x\in V(G)$ and $1\le k\le \rad(G)\le n-1$. Connectivity of each residual graph can be tested in $O(n^2)$ time, for a total of $O(n^4)$, or precomputed within the same bound. Empty residuals and singleton residuals are handled directly. For every connected residual graph $H$ with at least two vertices, Algorithm~\ref{alg:path} takes $O(|V(H)|^3)\le O(n^3)$ time. Therefore the total running time is
\[
  O(n^2)\cdot O(n^3)=O(n^5).
\]
The path routine uses $O(n^3)$ working space, and the outer loop reuses this space.
\end{proof}

\section{Implementation notes}

The algorithm above is most naturally implemented with integer labels for components and sides.

For every ball $(v,p)$, store:
\begin{itemize}[leftmargin=2em]
  \item the number of residual components $\kappa(v,p)$;
  \item for every vertex $z\notin \Ball(v,p)$, a local component index $\Comp(v,p,z) \in \{1, 2\}$;
  \item the size of each residual component, used to compute $|L_\sigma|$.
\end{itemize}

A state can then be represented by four integers:
\[
  (v,p,\ell,r),
\]
where $\ell, r \in \{0, 1, 2\}$ are local component indices, and index $0$ denotes the empty side. The source states are those with $\ell=0$; the sink states are those with $r=0$.

To maintain the $O(n^3)$ space bound, the four-dimensional requirement array $\Req$ is indexed by these local component identifiers rather than global component labels, yielding an array of size $n \times n \times 2 \times n$. 

When enumerating arcs from $(v,p,\ell,r)$ to $(w,q,\ell',r')$, the test is constant-time:
\begin{enumerate}[label=(\roman*),leftmargin=3em]
  \item $r\ne 0$ and $\ell'\ne 0$;
  \item $q=\dist(v,w)-p-1$ and $1\le q\le \rho$;
  \item $\Comp(v,p,w)=r$ and $\Comp(w,q,v)=\ell'$;
  \item $\Req(v,p,r,w)\le q$ and $\Req(w,q,\ell',v)\le p$.
\end{enumerate}

The dynamic program stores predecessor pointers to reconstruct the path of states. The final broadcast assigns to every center appearing on that path the radius stored in its state.

In the general algorithm, singleton residual graphs should not be passed to the path routine and interpreted with cost $0$. Instead, if deleting $\Ball_G(x,k)$ leaves the single vertex $y$, the implementation should evaluate the explicit feasible candidate with $f(x)=k$ and $f(y)=1$. This preserves the standalone convention $\PathCost(K_1)=0$ while keeping every outer-loop candidate a valid broadcast of the original graph.

\section{Experimental evaluation}
\label{sec:experiments}

This section reports a small reference-implementation experiment comparing the algorithm of Section~\ref{sec:general} with an implementation of the previous Heggernes--Lokshtanov framework. The purpose of the experiment is not to provide hardware-normalized performance claims, but to check that the theoretical improvement in the path-case routine is visible in an executable implementation.

\subsection{Implementations and methodology}

Both implementations were written in Python using the same graph representation, the same all-pairs distance routine, the same integer-bitset operations for balls and residual graphs, and the same safe incumbent pruning. The only intended difference between the two solvers was the path-case subroutine.

The baseline solver follows the previous state-of-the-art structure of Heggernes and Lokshtanov~\cite{HL06}: it uses the same peel-one-ball outer loop, but invokes an anchored path-case routine in the style of their RMPBD algorithm. The new solver uses Algorithm~\ref{alg:path} as the path-case routine. Thus the experiment isolates the change from the anchored path solver to the anchor-free state-DAG solver as much as possible within a Python reference implementation.

For every test instance, both solvers were run on the same input graph and their returned costs were compared. All instances in this experiment returned identical optimum costs. Each table entry reports median wall-clock time over the runs recorded by the benchmark script. The tests were run under Python~3.13.5 in a Linux container with approximately 4GB of memory; the containerized environment is useful for reproducibility but should not be interpreted as a calibrated benchmarking platform.

\subsection{Benchmark families}

The benchmark suite contained seven graph families: paths, cycles, random trees, sparse connected Erd\H{o}s--R\'enyi graphs, barbells, stars, and wheels. These families were chosen to separate two effects. Paths and cycles force substantial use of the path-case machinery, so they are expected to show the clearest improvement. Stars and wheels usually have immediate radial or near-radial optima, so both algorithms terminate quickly and little speedup is expected.

\begin{table}[t]
\centering
\small
\begin{tabular}{lrrrrrr}
\toprule
Family & Cases & Max $n$ & Median speedup & Max speedup & New time & Baseline time \\
\midrule
Barbell & 5 & 28 & $3.06\times$ & $4.25\times$ & 0.0703s & 0.2992s \\
Cycle & 7 & 32 & $3.93\times$ & $6.66\times$ & 0.7185s & 4.7873s \\
Path & 9 & 40 & $4.77\times$ & $8.39\times$ & 0.6902s & 5.7876s \\
Random tree & 6 & 30 & $3.05\times$ & $6.28\times$ & 0.0118s & 0.0743s \\
Sparse random & 6 & 30 & $2.30\times$ & $5.12\times$ & 0.0313s & 0.1603s \\
Star & 6 & 160 & $1.02\times$ & $1.26\times$ & 0.0048s & 0.0049s \\
Wheel & 9 & 80 & $0.99\times$ & $1.47\times$ & 0.0019s & 0.0019s \\
\bottomrule
\end{tabular}
\caption{Timing comparison between the anchor-free implementation and an HL-style anchored baseline. ``New time'' and ``Baseline time'' are the median running times on the largest compared instance in the family. Speedup is baseline time divided by new time.}
\label{tab:experiments-summary}
\end{table}

\begin{figure}[t]
\centering
\IfFileExists{benchmark_against_sota_speedup.png}{%
\includegraphics[width=.82\linewidth]{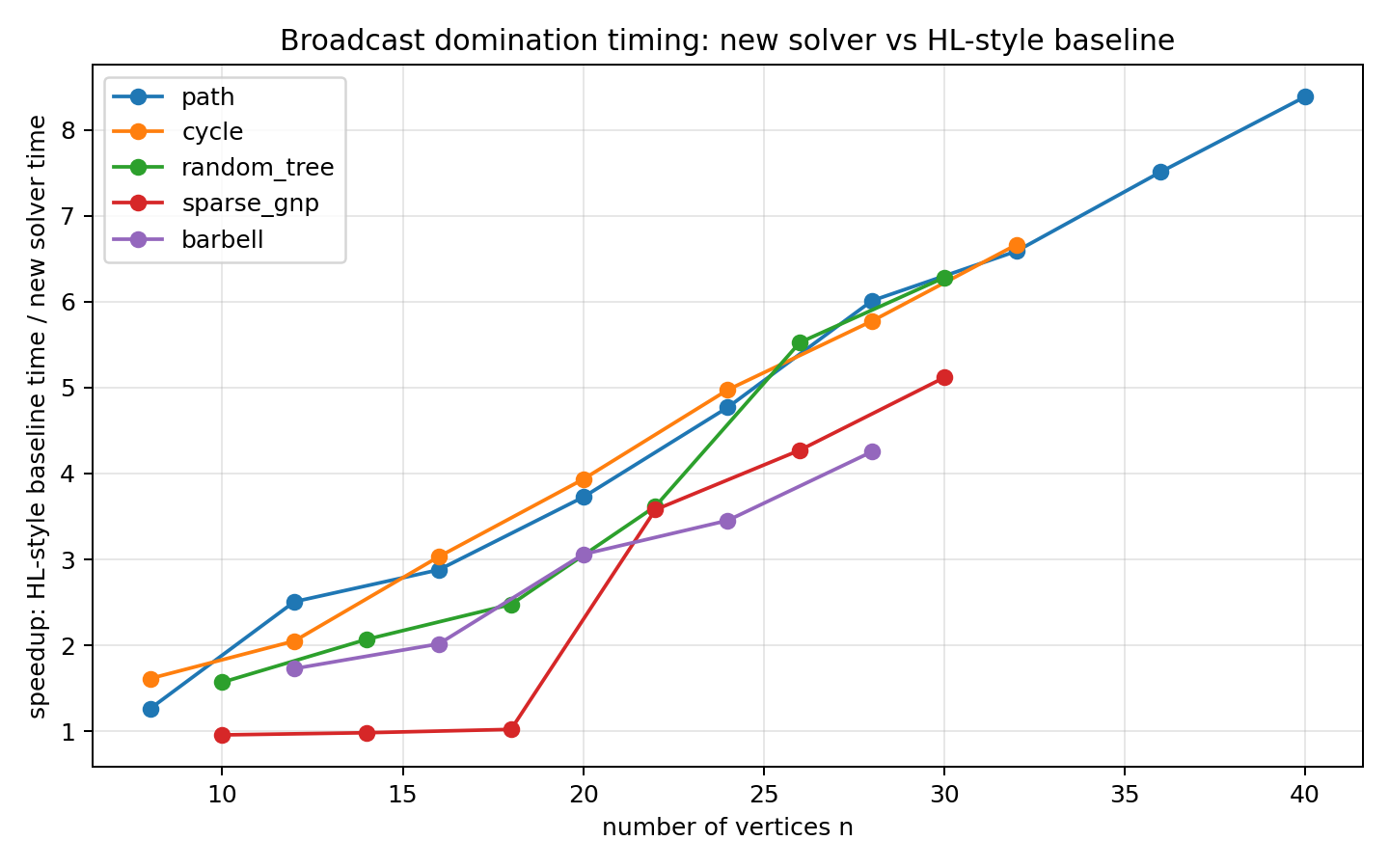}%
}{%
\fbox{\begin{minipage}{.78\linewidth}\centering
The file \texttt{benchmark\_against\_sota\_speedup.png} was not found.
\end{minipage}}%
}
\caption{Measured speedup of the anchor-free implementation over the HL-style anchored baseline. The largest improvements occur on graph families where the path-case subroutine dominates the running time.}
\label{fig:experiments-speedup}
\end{figure}

\subsection{Results}

Table~\ref{tab:experiments-summary} summarizes the comparison, and Figure~\ref{fig:experiments-speedup} plots the same speedup data. On paths and cycles, the speedup grows steadily with $n$. At the largest path instance tested, $n=40$, the new implementation ran in 0.6902 seconds while the anchored baseline took 5.7876 seconds, a speedup of $8.39\times$. At the largest cycle instance tested, $n=32$, the new implementation ran in 0.7185 seconds while the baseline took 4.7873 seconds, a speedup of $6.66\times$.

The improvement is also visible on random trees, sparse connected random graphs, and barbells, with median speedups between $2.30\times$ and $3.06\times$. On stars and wheels, the two implementations are essentially tied. This is consistent with the structure of these instances: their optima are radial or near-radial, and both implementations identify good candidates before spending much time inside the path-case routine.

\subsection{Interpretation}

The experiment supports the theoretical analysis in two ways. First, the returned objective values agreed on every benchmark instance, providing an additional implementation check beyond the exhaustive and randomized exact-oracle tests used during development. Second, the speedups are largest exactly where the analysis predicts they should be largest: on instances where the algorithm repeatedly invokes the path-case solver on large residual graphs. The modest speedup on easy radial families is not a contradiction of the asymptotic improvement; it reflects the fact that the path-case bottleneck is absent or nearly absent on those inputs.

The benchmark should be read as a reference-implementation comparison. The baseline is an HL-style implementation rather than the original authors' code, and both solvers are written in Python. An optimized implementation in C or C++ could change constant factors substantially. Nevertheless, because the two solvers share the same infrastructure and differ primarily in the path-case subroutine, the observed speedups are consistent with the theoretical replacement of an $O(n^4)$ anchored path routine by the $O(n^3)$ anchor-free routine of Algorithm~\ref{alg:path}.

\section{Concluding remarks}

The improvement from $O(n^6)$ to $O(n^5)$ is obtained without changing the global path/cycle structure theorem. The only change is in the path-case subroutine: instead of choosing an anchor vertex and solving $n$ related acyclic shortest-path problems, the algorithm orients every candidate separator ball directly and solves one acyclic shortest-path problem. The peel-one-ball reduction of Heggernes and Lokshtanov then immediately yields the general $O(n^5)$ bound, thereby resolving the quintic-time conjecture for general graphs attributed to Heggernes and S\ae ther.

A further improvement to $O(n^4)$ would require avoiding the $O(n^2)$ peel-ball loop or solving all residual path instances simultaneously. The state graph developed here is acyclic in the path case, but the analogous construction for cycles becomes a genuine compatible-cycle problem and loses the monotonicity used in Lemma~\ref{lem:dag}.

\end{document}